\newenvironment{proof}{\begin{IEEEproof}}{\end{IEEEproof}}
\newcommand{\bbar}[1]{\setbox0=\hbox{$#1$}\dimen0=.2\ht0 \kern\dimen0 \overline{\kern-\dimen0 #1}}
\newtheorem{thm}{Theorem}[section]
\newtheorem{theorem}[thm]{Theorem}
\newtheorem{corollary}[thm]{Corollary}
\newtheorem{lemma}[thm]{Lemma}
\newtheorem{proposition}[thm]{Proposition}
\newtheorem{prop}[thm]{Proposition}
\newtheorem{definition}[thm]{Definition}
\newtheorem{remark}[thm]{Remark}
\newtheorem{example}[thm]{Example}
\newcommand{\Z}{{\mathbb Z}}
\newcommand{\R}{{\mathbb R}}
\newcommand{\C}{{\mathbb C}}
\newcommand{\D}{{\mathcal D}}
\newcommand{\HH}{{\mathbf H}}
\newcommand{\OO}{{\mathcal O}}
\newcommand{\A}{{\mathcal A}}
\newcommand{\B}{{\mathcal B}}
\newcommand{\Q}{{\mathbb Q}}
\newcommand{\abs}[1]{\vert #1 \vert}
\newcommand{\mindet}[1]{\hbox{\rm det}_{min}\left( #1\right)}
\newcommand{\diag}{{\rm diag}}
\begin{document}

\title{A non-commutative analogue of the Odlyzko bounds and bounds on performance for space-time lattice codes 
\thanks{The research of B.~Linowitz was partially supported by an NSF RTG grant DMS-1045119 and an NSF Mathematical Sciences Postdoctoral Fellowship. }
\thanks{The research of   M.~Satriano was partially supported by an NSF Mathematical Sciences Postdoctoral Fellowship.}
\thanks{The research of R.~Vehkalahti is funded by   Academy of Finland  grant  \#252457.}}

\author{Benjamin Linowitz,  Matthew Satriano and Roope Vehkalahti
\thanks{B.~Linowitz is with the Department of Mathematics, University of Michigan, 530 Church Street, Ann Arbor, MI, 48109, USA (email: linowitz@umich.edu)}
\thanks{M.~Satriano is with Division of Biostatistics and Bioinformatics, Department of Oncology, Sidney Kimmel Comprehensive Cancer Center, Johns Hopkins University School of Medicine, 550 North Broadway, Suite 1103, Baltimore, MD 21205 (email: msatria2@jhu.edu)}
\thanks{R.~Vehkalahti is with the Department of Mathematics, FI-20014, University of Turku, Finland  (e-mail: roiive@utu.fi)}}

\maketitle

\begin{abstract}
This paper considers space-time coding over several independently Rayleigh faded blocks. In particular we will concentrate on giving upper bounds for 
the coding gain of lattice space-time codes as the number of blocks grow.  This problem was previously considered in the single antenna case by Bayer et al. in 2006.    Crucial to their work was Odlyzko's bound on the discriminant of an algebraic number field, as this provides an upper bound for the normalized coding gain of number field codes.  In the MIMO context natural codes are constructed from division algebras defined over number fields and the coding gain is measured by the discriminant of the corresponding (non-commutative) algebra. In this paper we will develop analogues of the Odlyzko bounds in this context and show how these bounds limit the normalized coding gain of a very general family of division algebra based  space-time codes. These bounds can also be used as benchmarks in practical code design and as tools to analyze asymptotic bounds of performance as the number of independently faded blocks increases.
\end{abstract}

\section{Introduction}


Consider a lattice $L \subset \C^n$ having fundamental parallelotope of volume one and define a function
$f_1:\C^n\to \R$ by 
\begin{equation}\label{euclidean}
f_1(x_1,\dots,x_n)=|x_1|^2+|x_2|^2+\cdots+ |x_n|^2.
\end{equation} 
The real number $h(L)=\inf_{x \in L, \, x\neq{\bf 0}} f_1(x)$ is the \emph{Hermite invariant} of the lattice $L$. In rough terms we may say that the greater the Hermite invariant of a lattice is, the higher the guaranteed protection against worst case pairwise error when a subset of the lattice is used as a code in the Gaussian or slow fading channel.
Similarly, if we have a Rayleigh fast fading single antenna channel, the role of the function $f_1$ is played by the function
\begin{equation}\label{fastfading}
f_2(x_1,x_2,\dots, x_n)=|x_1x_2\cdots x_n|.
\end{equation}
The real number $Nd_{p, min}(L)=\inf_{x\in L, x\neq{\bf 0}} f_2(x) $ is the \emph{normalized product distance} of the lattice $L$ and can be used to identify the best lattice code for the fast fading channel on high signal-to-noise ratio (SNR) regime.

Let us now consider the  main topic of this paper. Suppose that we  have $n$ transmit antennas and a \emph{Rayleigh block fading channel} where the  fading stays stable for $n$ units of  time and then changes independently for the next $n$ units of time. The ability to encode and decode over $m$ such independently faded blocks implies that our lattice code $L$ lies in the space $M_{n\times mn}(\C)$. Let us suppose that $(X_1, X_2,\dots, X_m)$  is an element of $M_{n\times mn}(\C)$,  and define
\begin{equation}\label{mindet}
f_3(X_1, X_2,\dots, X_m)=\prod_{i=1}^m |det(X_i)|.
\end{equation}
In analogy with the functions defined above, we can define the \emph{normalized minimum determinant} of the lattice $L$ by
$\delta(L)=\inf_{X\in L, X\neq{\bf 0}}|f_3(X)| $. Again, the number $\delta(L)$ can be seen as measuring the quality of the lattice L.

The following problems are natural to consider in all three cases. 
\begin{itemize}
\item [1]  Given the channel, find optimal lattices that maximize the corresponding function $f_i$.
\item [2]  Given a lattice, find upper and lower bounds for the maximal value obtained by the function $f_i$.
\end{itemize}

From a mathematical standpoint these problems can be seen as arising in the classical \emph{geometry of numbers}, though good solutions for the problems in full generality do not appear to be in the literature. 

The case in which one considers the function $f_1$, defined in \eqref{euclidean}, and the associated Hermite invariant, is known as the sphere packing problem. In this setting there exist a number of good constructions and general bounds.  

In the case of the function $f_2$, defined in \eqref{fastfading}, and the associated real number $Nd_{p, min}(L)$, most of the known constructions are based on algebraic number fields and good general  bounds are known  only in the case where the lattice $L$ is real \cite[p. 615]{Hasse}.  

For the function $f_3$ defined in (\ref{mindet}) and the associated real number $\delta(L)$, to the best of our knowledge there are no good general bounds.

For a general lattice $L\subset M_{n\times mn}(\C)$ finding good bounds for $\delta(L)$ is an extremely difficult task. For this reason we restrict our attention to a broad class of lattices arising from central division algebras defined over number fields. For the lattices arising from this construction we can say a great deal more about $\delta(L)$.

In order to describe our results on  $\delta(L)$, let us first briefly describe the general construction principle behind these algebraic lattices. There are many ways to construct lattice codes with good Hermite invariant.  To build a lattice code with good product distance or minimum determinant, the task is more difficult.  A usual method is to choose a central simple algebra or number field $\A$, a suitable subset $\Lambda\subset \A$ and then a faithful representation $\psi$ which maps every element of $\A$ to a suitable matrix space $M_{m\times mn}(\C)$. If the mapping $\psi$, the subset $\Lambda$, and the algebra $\A$ are well chosen then the set  $\psi(\Lambda)$ will be a lattice in $M_{n\times mn}(\C)$  and will have a  good minimum determinant. This type of construction offers a rich selection of lattice codes. 
Assuming that this algebraic construction yields a $k$-dimensional lattice in the given matrix space $M_{n\times mn}(\C)$, it is natural to ask for bounds on the size of the minimum determinant.

This problem was first considered in the context of number field codes in a fast fading SISO channel in \cite{FOV} by using the Odlyzko bounds \cite{Odlyzko-bounds}, and in \cite{Xing} by using sphere packing bounds. In  \cite{BORV}, \cite{HLRV2} and \cite{RoopeThesis} the problem was considered in the case in which $m=1$ and $n\geq 1$. In \cite{HL} the authors concentrated on the case where $n=m=2$.

In this work we will generalize and unify previous  number field and division algebra constructions and relate the normalized  minimum determinant to the \emph{discriminant} of the corresponding algebra. We will then give completely general lower bounds for the discriminant of any division algebra and derive upper bounds for the minimum determinants of the corresponding lattices. The discriminant bounds given in this paper are a generalization of the Odlyzko bounds in number fields and are of independent interest.

\smallskip
 
We will begin by defining the channel model, lattice codes and finite codes associated to a lattice. We will then describe the suitability of the normalized minimum determinant as a design criterion in one shot MIMO channels and make some remarks on the limits of this criterion.
In Section \ref{multiblockdesign} we show how this criterion  can  be extended to the multi-block channel.  In  Section \ref{oneshotalgebra}   we briefly review the known construction methods of lattice codes from division algebras. We then extend these methods so as to obtain a lattice code for a multiblock channel from any order in a central division algebra. The presented explicit methods follow \cite{YB07} and \cite{Lu}, and unify \cite{HL} and \cite{VHO}. The construction method given in Proposition \ref{abs2} generalizes the previously used methods by allowing us to consider a larger array of centers.

 Section \ref{bounding_sec}  contains the main results of our paper. In   construction Sections \ref{oneshotalgebra}  and \ref{construction} we did prove that in most cases the normalized minimum determinant of a division algebra code depends only on the discriminant of the algebra.  Unlike the case of number fields however, the mathematical literature does not offer ready-to-use bounds for the discriminant of a central division algebra defined over a number field. The discriminant bounds in \cite{RoopeThesis} do solve this problem, but only after the center is fixed.  However, in the general case, where we are allowed to optimize our code over all number fields with a fixed degree (or even signature), these results do not apply. The problem is that the $\Z$-discriminant of a central division algebra $\A$  defined over a number field $K$ is   a product of terms depending  on the discriminant of the center  $d(\OO_K/\Z)$ and the $K$-discriminant $d(\Lambda/ \OO_K)$ of  the algebra $\A$ and minimizing  one term might implicitly make the other term bigger.  This problem was first considered in \cite{HL}, where the authors were able to solve the problem for division algebras of degree $2$ over totally complex fields of degree $4$.

In Section \ref{bounding_sec}  we will give completely general lower bounds which make no assumptions on the degree (or even signature) of the center or division algebra. The proofs of these results combine the bounds in \cite{RoopeThesis}  with an analysis of the proof of the original Odlyzko bounds for number fields. As described in \cite{RoopeThesis}, the discriminant $d(\Lambda/ \OO_K)$  depends only on the two prime ideals of $\OO_K$ of smallest norm. In order to find lower bounds we make crucial use of the fact that the original proof of Odlyzko (and certainly its refinement due to Poitou \cite{Poitou}) describes the impact on $d(\OO_K/Z)$ of the assumption that the field $K$ has prime ideals with small norm. The proofs of our  theorems are a result of balancing the effect of  small primes on $d(\Lambda/\OO_K)$ against their effect on making the field discriminant $d(\OO_K/\Z)$  bigger. The bounds we develop therefore form a non-commutative analogue of the Odlyzko bounds in algebraic number theory.

While Section \ref{bounding_sec} is strongly mathematical, Section \ref{examples} returns to the coding theoretic context. We first derive some easy-to-use corollaries of our main theorems and then show how it is possible to find algebras that are optimal for our bounds. We then show how these discriminant bounds can be used so as to deduce minimum determinant bounds.  Finally, we compare the resulting bounds to the minimum determinants of some example codes.

As in the case of the traditional Odlyzko bounds, our non-commutative bounds seem to be very tight. Therefore the bounds  can be used as a benchmark in code design and provide understanding of the asymptotic behavior of the  worst case pairwise error probability.  The weakness of our approach 
lies in  the fact that while minimum determinant criteria (in one form or another) has been applied in a number of space-time coding papers it only considers pairwise error and not the actual error probability. We will discuss this issue in Section \ref{oneshotalgebra} and suggest a remedy  to this problem.

\subsection{Channel model}

In this paper we are considering the so called multiblock Rayleigh faded channel with minimal delay. In such a channel a codeword  
$X\in M_{n\times nm}(\C)$ has the form $(X_1,X_2,\dots, X_m)$, where $X_i \in M_n(\C)$. The channel equation, for transmitting i'th block $X_i$, then has the form
\begin{equation}\label{eq:channel}
Y_i=H_i X_i +N_i, 
\end{equation}
where $H_i \in M_{n_r \times n}(\C)$ is the channel matrix and $N_i\in M_{n_r \times T}(\C)$ is the noise; $n_r$ denotes the number of receiving antennas. Here we assume that each of $H_i$ are independently Rayleigh faded and the decoding is  done after the receiver has received all $m$ blocks.  We will call such a channel an $(n, n_r, m)$-multiblock channel. We note that when $m=1$ this is the usual one shot MIMO channel and when $n=1$ we are dealing with the fast fading single antenna channel.

A code  $C$ in  a  $(n, n_r, m)$-channel is  a set of matrices  in $M_{n\times nm}(\C)$. This paper will discuss  code design and performance limits of codes in this channel. In particular we will concentrate on finite codes that are drawn from lattices and we assume that the receiver has  perfect channel state information.

\subsection{Lattices and spherical shaping }

\begin{definition}
A {\em matrix lattice} $L \subseteq M_{n\times T}(\C)$ has the form
$$
L=\Z B_1\oplus \Z B_2\oplus \cdots \oplus \Z B_k,
$$
where the matrices $B_1,\dots, B_k$ are linearly independent over $\R$, i.e., form a lattice basis, and $k$ is
called the \emph{rank}  or the \emph{dimension} of the lattice.
\end{definition}

The space $M_{n\times T}(\C)$ is a $2nT$-dimensional real vector space with a real inner product
$$
\langle X,Y\rangle=\Re(Tr (XY^{\dagger})),
$$
where $Tr$ is the matrix trace. This inner product also naturally defines a metric on the space $M_{n\times T}(\C)$ by setting $||X||_F= \sqrt{\langle X,X\rangle}$.

We now consider a spherical shaping scheme based on a $k$-dimensional lattice $L$ inside $M_{n\times T}(\C)$. 
Given a positive real number $R$ we define 
$$
L(R)=\{X \in L : ||X||_F \leq R, X\neq {\bf 0} \}.
$$

These codes $L(R)$ will be  the finite codes we are considering.

\subsection{Design criterion for one shot MIMO}\label{oneshot}
Before  presenting a design criterion for the multiblock channel we describe the minimum determinant criterion used in the usual MIMO Rayleigh fading channel.
The concept of  \emph{normalized minimum determinant} we are going to define has appeared implicitly or in restricted forms in several papers in space-time coding.   Early attempts to define it in generality were given in \cite{WX} and \cite{WWX}, but only in \cite{LV} was the normalized minimum determinant defined formally and in a manner completely analogous to the definition of the Hermite invariant. Despite various papers where it has been used as a code design criterion, the needed energy normalizations still seem to cause confusion. We will therefore  try to give an improved explanation of the concept here.

Let us suppose that $L$ is a $k$-dimensional lattice in  $M_{n\times T}(\C)$ and that we consider a finite code $L(R)$.
Let $\theta$ be a positive constant with the property that
$$
\frac{1}{|L(R)|}\sum_{X\in L(R)} ||\theta X||^2=T.
$$

Let us now consider transmission of codewords from $L(R)$ in  the Rayleigh fading MIMO channel with $n=n_t$ transmit antennas and $n_r$ receive antennas. The channel is assumed to be fixed for a block of $T$ channel uses, but to vary in an independent and identically distributed (i.i.d.) fashion from one block to another. Thus, the channel input-output relation can be written as 
\begin{equation}
Y=\sqrt{\rho}H\theta X +N, \label{eq:channel2}
\end{equation}
where $H \in M_{n_r \times n}(\C)$ is the channel matrix and $N\in M_{n_r \times T}(\C) $ is the noise matrix. The entries of $H$ and $N$ are assumed to be  i.i.d. zero-mean complex circular symmetric Gaussian random variables with variance 1.  
The matrix $X \in L(R)$ is the transmitted codeword, and the term $\rho$ denotes the signal-to-noise ratio (SNR).

Following \cite{TSC},  we can bound the pairwise error probability between two codewords $X \neq X' \in L(R)$ by above when transmitting with SNR $\rho$:
\begin{equation}\label{probsum}
P(\rho,X \to X')\leq \frac{1}{(\det(I+ \frac{\rho\theta^2}{4n}(X-X')(X-X')^*))^{n_r}}, 
\end{equation}
where $*$ denotes complex conjugate transpose.

Combining this expression with the union bound we can now deduce an upper bound for the average error probability when transmitting a codeword from $L(R)$ at SNR $\rho$:
$$
P_e(\rho)\leq  \sum_{\substack{X\in L,\\ 0 <||X||_F \leq 2R}}\frac{1}{(\det(I+ \frac{\rho\theta^2}{4n}XX^*))^{n_r}}.
$$
If we suppose that $\rho$ is particularly large and the matrices $X$ are invertible then we obtain the further bound
\begin{equation}\label{inversesum}
P_e(\rho)\leq  \sum_{\substack{X\in L,\\ 0 <||X||_F \leq 2R}}\frac{1}{(\det(\frac{\rho\theta^2}{4n} XX^*))^{n_r}}.
\end{equation}
In what follows, the matrices in our lattices will not only be invertible but have an even stronger property.

\begin{definition}\label{def:NVD}
If the \emph{minimum determinant} of the lattice $L \subseteq M_{n\times T}(\C)$ is non-zero, i.e. satisfies
\[
\mindet{L}:=\inf_{{\bf 0} \neq X \in L} \sqrt{\abs{\det (XX^*)}} > 0, 
\]
we say that the lattice satisfies the \emph{non-vanishing determinant} (NVD) property.
\end{definition}

Assuming now that $\mindet{L}=c$,  we can further improve our inequality \eqref{inversesum} with
\begin{equation}\label{mindetsum}
P_e(\rho)\leq  \sum_{\substack{X\in L,\\ 0 <||X||_F \leq 2R}}\frac{(4n)^{nn_r}}{c^{2n_r}\theta^{2nn_r}\rho^{nn_r}}.
\end{equation}
This bound suggests that  minimum determinant plays a crucial role in the code design. However, in order to compare two $k$-dimensional lattices $L_1$ and $L_2$, the comparison based on the minimum determinant is relevant only if both the needed constants $\theta_1$, $\theta_2$   and
 the number of codewords in $L_1(R)$ and $L_2(R)$ are close to one another.  Therefore we need a normalization that guarantees a fair comparison.

Let us now suppose we have a  $k$ dimensional lattice $L\subset M_{n\times T}(\C)$. 
The   {\it Gram matrix} of the lattice $L$ is defined as 
$$G(L)=\left( \langle X_i,X_j\rangle)\right)_{1\le i,j\le k},$$
where $\{ X_i \}$ is a basis of $L$.  The volume of the fundamental parallelotope of $L$ is then defined as 
$vol(L)=\sqrt{|det(G(L))|}$.

The following lemma proves that if we have two  $k$-dimensional lattices  $L_1$ and $L_2$ in the same space $M_{n\times T}(\C)$, then the scaling factors $\theta_1$ and $\theta_2$ needed for normalization are roughly the same  and the finite codes $L_1(R)$ and $L_2(R)$ have roughly the same number of codewords.

Although both of the assertions in the following lemma are well known, we give a complete proof of the second as it seems to have caused some confusion within the space-time community.
\begin{lemma}\label{spherical}
Let $L$ be a  $k$-dimensional lattice  with a unit fundamental parallelotope in  $M_{n\times T}(\C)$ and
$L(R)$ be defined as above. Then
$$
|L(R)|= c_1M^{k}+ O(R^{k-1})\,\, \mathrm{and} \, \sum_{X\in L(R)} ||X||_F^2 = c_2R^{k+2}+ O(R^{k+1}),
$$
where $c_i$ are constants independent of $R$ and the lattice $L$, and $O$ is Landau's big $O$.
\end{lemma}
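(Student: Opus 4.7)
The plan is to reduce to the classical Euclidean setting, then apply a standard lattice-point counting argument. First I would set $V := \mathrm{span}_{\R}(L) \subset M_{n\times T}(\C)$, a $k$-dimensional real inner product space under the restriction of $\langle\cdot,\cdot\rangle$. Any such space is linearly isometric to the standard Euclidean $\R^k$; fix an isometry $\phi \colon V \to \R^k$. Then $\phi(L)$ is a unit-covolume lattice in $\R^k$, and since $\phi$ preserves $\|\cdot\|_F$, we have $\phi(L(R)) = \phi(L) \cap B(0,R)$, where $B(0,R)$ is the standard Euclidean ball of radius $R$. Hence we may assume from the outset that $L$ is a unit-covolume lattice in $\R^k$ and $L(R) = (L \cap B(0,R))\setminus\{0\}$.

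For the first estimate I would invoke the standard Gauss tiling argument. Let $P$ be a fundamental parallelotope for $L$ and let $d=\mathrm{diam}(P)$. The translates $\{x+P : x \in L \cap B(0,R)\}$ are pairwise disjoint and their union lies between $B(0,R-d)$ and $B(0,R+d)$. Since $\mathrm{vol}(P)=1$, comparing volumes gives
\[
V_k(R-d)^k \;\le\; |L \cap B(0,R)| \;\le\; V_k(R+d)^k,
\]
where $V_k = \pi^{k/2}/\Gamma(k/2+1)$ is the volume of the unit ball in $\R^k$. Expanding shows $|L(R)| = V_k R^k + O(R^{k-1})$, so one may take $c_1 = V_k$, which depends only on $k$ and not on $L$.

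For the second estimate I would compare the sum to the integral $\int_{B(0,R)} \|y\|^2\,dy$. Writing
\[
\sum_{x\in L(R)}\|x\|^2 \;=\; \sum_{x} \int_{x+P}\|x\|^2\,dy
\]
and using $\bigl|\|x\|^2-\|y\|^2\bigr| \le 2(R+d)d = O(R)$ for $y\in x+P$, one incurs an error of $O(R) \cdot O(R^k) = O(R^{k+1})$. Replacing the domain $\bigcup_x(x+P)$ by $B(0,R)$ changes the integral by at most $O(R^2) \cdot O(R^{k-1}) = O(R^{k+1})$, since the symmetric difference lies in an annulus of width $d$ on which the integrand is bounded by $(R+d)^2$. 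Polar integration then gives $\int_{B(0,R)}\|y\|^2\,dy = kV_k\int_0^R r^{k+1}\,dr = \tfrac{k}{k+2}V_k R^{k+2}$, so $c_2 = kV_k/(k+2)$, again a function only of $k$.

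The main technical obstacle is keeping the boundary-layer contribution in the second assertion at $O(R^{k+1})$: this requires the sharper pointwise estimate $|\|x\|^2 - \|y\|^2| = O(R)$ (rather than the crude $O(R^2)$) for $y$ in a translate of the fundamental parallelotope, which in turn follows from $\|x-y\| \le d$ together with $\|x\|,\|y\| \le R+d$. Once this bound is in hand, both the interior pointwise-to-integral approximation and the annular boundary correction contribute the same order $O(R^{k+1})$, matching the statement.
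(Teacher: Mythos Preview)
Your proposal is correct and follows essentially the same approach as the paper: both compare the sum $\sum_{x\in L(R)}\|x\|^2$ to the integral $\int_{B(R)}\|y\|^2\,dy$ via a tiling of $\R^k$, using the pointwise estimate $\bigl|\|x\|^2-\|y\|^2\bigr|=O(R)$ for $y$ in a bounded cell around $x$. The only cosmetic differences are that the paper tiles with Voronoi cells rather than parallelotope translates and organizes the argument as a two-sided sandwich of the integral by sums, whereas you bound the sum-integral difference directly; the paper also omits the reduction to $\R^k$ and the proof of the first assertion as well known.
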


\begin{proof}
Let us denote the Voronoi cell of a point $x\in L$ by $V_x$, and let $r$ be a real number such that for any $x\in L$ and for any $y\in V_x$ we have $||x-y||<r$. We then have that for any $x\in L$ and for any $y\in V_x$,  
$$
||x||^2 -2r||x||-r^2\leq ||y||^2\leq ||x||^2+ 2r||x||+r^2. 
$$
 We can therefore write
$$
\sum_{x\in L(R-r)} (||x||^2-2r||x||-r^2)vol(V_x) \leq \int_{B(R)}  ||x||^2 dx \leq  \sum_{x\in L(R)} (||x||^2+2r||x||+r^2)vol(V_x),
$$
where $B(R)$ is a closed ball of radius $M$ about the origin. As we have assumed that $vol(L)=1$, we have that $vol(V_x)=1$ for all $x$ and
\begin{equation}\label{estimate}
\sum_{x\in L(R-r)} ||x||^2 -\sum_{x\in L(R-r)}(2r||x|| +r^2)\leq  \int_{B(R)} ||x||^2 dx \leq  \sum_{x\in L(R)} ||x||^2 +  \sum_{x\in L(R)}(2r||x||+r^2).
\end{equation}
 The integral in the middle grows like $cR^{k+2} +O(R^{k+1})$ and according to the first statement the sum $\sum_{x\in L(R)}  2r||x||+ r^2$ is bounded above by  $CR^{k+1}$ for some $C$ independent of $R$. Using  again the first statement we have that  $\sum_{x\in L(R)} ||x||^2-\sum_{x\in L(R-r)} ||x||^2\in O(R^{k+1})$. Taking all these into account and reorganizing \eqref{estimate} we get the claim.
\end{proof}

We can now define the {\em normalized minimum determinant} $\delta(L)$, which  is obtained  by first scaling the lattice $L$ to have a unit size fundamental parallelotope and then taking the minimum determinant of the resulting scaled lattice. A simple computation proves the following.
\begin{lemma}\label{scale}
Let $L$ be a $k$-dimensional matrix lattice
in $M_{n\times T}(\C)$. We then have that
$$
\delta(L) =\mindet{L}/(vol(L))^{n/k}.
$$
\end{lemma}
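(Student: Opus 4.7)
The plan is a direct scaling computation, using the two scaling behaviors of the quantities involved. Let $\alpha > 0$ be the unique positive scalar such that the rescaled lattice $\alpha L \subset M_{n\times T}(\C)$ has a fundamental parallelotope of volume $1$; by definition, $\delta(L) = \mindet{\alpha L}$. The two facts to assemble are: (i) volume scales like $\alpha^k$ because $L$ is a rank-$k$ real lattice, and (ii) $\sqrt{|\det(XX^*)|}$ scales like $\alpha^n$ because $XX^* \in M_n(\C)$ picks up a factor of $\alpha^2$ in each of its $n$ eigenvalues.

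First I would determine $\alpha$. From the definition of $vol$ as $\sqrt{|\det G(L)|}$ and the fact that a lattice basis $\{B_1,\dots,B_k\}$ for $L$ rescales to a basis $\{\alpha B_1,\dots,\alpha B_k\}$ for $\alpha L$, the Gram matrix $G(\alpha L)$ equals $\alpha^2 G(L)$. Hence $vol(\alpha L) = \alpha^k\, vol(L)$, and the requirement $vol(\alpha L) = 1$ forces
$$\alpha = vol(L)^{-1/k}.$$

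Next I would track the scaling of the minimum determinant. For any $X \in M_{n\times T}(\C)$, $(\alpha X)(\alpha X)^* = \alpha^2 XX^*$, so
$$\sqrt{|\det((\alpha X)(\alpha X)^*)|} = \alpha^n \sqrt{|\det(XX^*)|}.$$
Taking the infimum over nonzero elements of $L$ (equivalently, nonzero elements of $\alpha L$) yields $\mindet{\alpha L} = \alpha^n\, \mindet{L}$. Substituting the value of $\alpha$ found above gives
$$\delta(L) = \mindet{\alpha L} = vol(L)^{-n/k}\, \mindet{L},$$
which is the claim. There is no real obstacle here; the only point requiring care is keeping the two exponents straight, noting in particular that $n$ (not $T$) appears in the determinant scaling because $XX^*$ is an $n\times n$ matrix, while $k$ (the real rank of $L$, not the ambient real dimension $2nT$) governs the volume scaling.
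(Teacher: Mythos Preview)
Your proof is correct and is precisely the ``simple computation'' the paper alludes to in lieu of a written proof; the paper does not spell out the argument, and your scaling computation (volume scales by $\alpha^k$, the quantity $\sqrt{|\det(XX^*)|}$ scales by $\alpha^n$) is exactly what is intended.
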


\begin{remark}
Different forms of minimum determinant criteria have been used in numerous papers on space-time coding. While a crude tool, the concept has been quite effective in code design. However, the derivation of the minimum determinant criterion through the union bound as in Lemma \ref{spherical} makes it clear  that the \emph{distribution} of the determinants in the lattice, and not just the minimum determinant, is quite relevant. This is particularly clear when the SNR compared to the code size is relatively small. This was already known in the very early work on number field codes \cite{BERB}, though the technical obstacles needed to analyze the question did not allow researchers at the time to attack this problem.

In the context of algebraic codes this problem was addressed in \cite{VLL2013}, where the distribution of determinants of number field and division algebra codes was analyzed.  This work revealed that division algebra based codes can be divided into different classes with respect to their \emph{signature} (defined below in Definition \ref{signature}).  However, as pointed out in \cite{LV2013}, the normalized minimum  determinant still plays a major role and is effective when we compare codes having the same signature.  The bounds we  will develop in this paper are sensitive to the signatures of the considered algebras, and can therefore be used to analyze the behavior of minimum determinants within the class of algebras having the same signature. We can conclude that while minimum determinant criteria are not a perfect measure of space-time codes, the bounds presented here will also be relevant to more refined analyses.
\end{remark}

\subsection{Design criterion for  multiblock  channel}\label{multiblockdesign}
Let us now show how the design criterion of the previous section can be used to define a design criterion for the multiblock channel. Let us  suppose we have a multiblock code $L\subset M_{n\times mn}(\C)$ and that $(X_1,X_2,\dots X_m)$ is a codeword in $L$.
The channel equation 
$$
(H_1X_1, H_2X_2,\dots, H_m X_m) +(N_1,N_2,\dots, N_m),
$$
can  just as well be written in the form
$$
(H_1,H_2,\cdots, H_m)\diag(X_1,X_2,\cdots, X_m) +\diag(N_1, N_2,\dots, N_m),
$$
where the diag-operator places the $i$th $n\times n$ entry in the $i$th diagonal block of a matrix
in $M_{mn}(\C)$. This reveals that  optimizing a code $L$ for the $(n, n_r, m)$-multiblock channel is equivalent to optimizing
$\diag(L)$ for the usual one shot $nm\times mn_r$ MIMO channel, where $\diag(L)$ is defined as $\{\diag(X)\mid X\in L\}$.

Let us now suppose we have  an $(n, n_r, m)$-multiblock code $L$.
\begin{definition}
By abusing notation we define the normalized minimum determinant for the code $L$ by
$$
\delta(L):=\delta(\diag(L)).
$$
\end{definition}

\begin{remark}
This definition is just a generalization of the \emph{minimum product distance} used in  fast fading  single antenna channels.
\end{remark}

We are now interested in the extrema of the normalized minimum determinants of $k$-dimensional  $(n, n_r, m)$-multiblock codes.

\section{Algebraic preliminaries and lattice codes for one shot MIMO }\label{oneshotalgebra}
Let us now describe how lattice codes from division algebras are typically built. We will follow the standard presentation (see \cite{OBV}), but with an order-theoretic perspective \cite{HLL}. The general idea is to  show how we can transform an abstract algebraic structure into a concrete lattice of matrices. This construction will form a basis for our  construction of multiblock codes. We refer the reader to \cite{OBV} for all proofs.

\begin{definition}\label{cyclic}
Let $K$ be an algebraic number field of degree $m$ and assume   that $E/K$ is a cyclic Galois
 extension of degree $n$ with Galois group
$Gal(E/K)=\left\langle \sigma\right\rangle$. We can define an associative $K$-algebra
$$
\mathcal{A}=(E/K,\sigma,\gamma)=E\oplus uE\oplus u^2E\oplus\cdots\oplus u^{n-1}E,
$$
where   $u\in\mathcal{A}$ is an auxiliary
generating element subject to the relations
$xu=u\sigma(x)$ for all $x\in E$ and $u^n=\gamma\in K^*$. We call the resulting algebra a \emph{cyclic algebra}.
\end{definition}

It is clear that the center of the algebra $\mathcal{A}$ is precisely the field $K$. That is, an element of $\A$ commutes with all other elements of $\A$ if and only if it lies in $K$.

 \begin{definition}
 We  call $\sqrt{[\A:K]}$ the \emph{degree} of the algebra $\A$. It is easily verified that the degree of $\A$ is equal to $n$.
\end{definition}

We consider $\A$ as a right  vector space over $E$
and note that every  element $a=x_0+ux_1+\cdots+u^{n-1}x_{n-1}\in\mathcal{A}$
has the following representation as a matrix
\[\psi(a)=A=\begin{pmatrix}
x_0& \gamma\sigma(x_{n-1})& \gamma\sigma^2(x_{n-2})&\cdots &
\gamma\sigma^{n-1}(x_1)\\
x_1&\sigma(x_0)&\gamma\sigma^2(x_{n-1})& &\gamma\sigma^{n-1}(x_2)\\
x_2& \sigma(x_1)&\sigma^2(x_0)& &\gamma\sigma^{n-1}(x_3)\\
\vdots& & & & \vdots\\
x_{n-1}& \sigma(x_{n-2})&\sigma^2(x_{n-3})&\cdots&\sigma^{n-1}(x_0)\\
\end{pmatrix}.
\]

This mapping allows us to embed any cyclic algebra into $M_n(\C)$. Under such an embedding $\psi(\A)$ forms an $mn^2$-dimensional $\Q$-vector space.  The map $\psi$ is called the \emph{left regular representation} of $\A$ (see Remark \ref{regremark}).

We are particularly interested in algebras $\A$ for which $\psi(a)$ is invertible for all non-zero $a\in\A$.

\begin{definition}\label{divisionalgebra}
A cyclic $K$-algebra $\A$ is a \emph{division algebra} if every
non-zero element of $\A$ is invertible.
\end{definition}

The set  $\psi(\A)$  is an additive subgroup of $M_n(\C)$ but is not discrete. This is obviously not a preferred property for a lattice code.
A usual strategy to try to overcome this problem is to restrict one's attention to the image in $M_n(\C)$ of a suitable subset of $\A$.

\begin{definition}
 A \emph{$\Z$-order} $\Lambda$ in $\A$ is a subring of $\A$ having the same identity element as
$\A$, and such that $\Lambda$ is a finitely generated
module over $\Z$ which generates $\mathcal{A}$ as a linear space over $\Q$.
\end{definition}

\begin{lemma}
Let $\Lambda$ be a $\Z$-order in a division algebra $\A$. We then  have that  $\psi(\Lambda)$ is a free group with $mn^2$ generators. 
In other words
$$
\psi(\Lambda)=\Z B_1\oplus \Z B_2\oplus \cdots \oplus \Z B_{mn^2}\subset M_n(\C).
$$
We also have that $det(X)\neq 0$ for every non-zero element $X\in\psi(\Lambda)$.
\end{lemma}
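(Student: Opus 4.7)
My plan is to prove the two claims — freeness of rank $mn^2$ and non-vanishing of determinants — separately, both reducing to the fact that $\A$ is a simple ring and $\psi$ is a ring homomorphism.

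First I would verify that $\psi$ is injective. Its kernel is a two-sided ideal of $\A$, and since $\A$ is a division algebra it is simple, so the only two-sided ideals are $(0)$ and $\A$ itself; the latter is excluded because $\psi(1)=I_n\ne 0$. This reduces the first claim to a statement about $\Lambda$ viewed purely as an abstract abelian group.

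Next I would determine the $\Z$-rank of $\Lambda$. As a subgroup of the $\Q$-vector space $\A$, $\Lambda$ is torsion-free, and the definition of a $\Z$-order demands that it be finitely generated. The structure theorem for finitely generated abelian groups then gives $\Lambda\cong\Z^r$ for some $r$. Because $\Lambda$ spans $\A$ as a $\Q$-vector space, $r=\dim_\Q(\Lambda\otimes_\Z\Q)=\dim_\Q\A=[\A:K][K:\Q]=n^2\cdot m=mn^2$. Choosing a $\Z$-basis $b_1,\ldots,b_{mn^2}$ of $\Lambda$ and setting $B_i=\psi(b_i)$, the injectivity of $\psi$ implies that the $B_i$ are $\Z$-linearly independent, yielding the asserted direct-sum decomposition $\psi(\Lambda)=\Z B_1\oplus\cdots\oplus\Z B_{mn^2}$.

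For the non-vanishing determinant claim, I would take any non-zero $X=\psi(a)\in\psi(\Lambda)$ and note that $a$ is invertible in $\A$ by the division algebra hypothesis, so $\psi(a)\psi(a^{-1})=\psi(1)=I_n$ in $M_n(\C)$. Hence $X$ is invertible as a matrix, and $\det(X)\ne 0$.

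I do not anticipate any genuine obstacle; the lemma is essentially a bookkeeping exercise that assembles the definitions of order, division algebra, and the regular representation $\psi$. The only point that deserves a careful sentence is the computation of $\dim_\Q\A$, which relies on the explicit presentation $\A=E\oplus uE\oplus\cdots\oplus u^{n-1}E$ from Definition~\ref{cyclic}.
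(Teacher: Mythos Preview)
Your argument is correct in every step: injectivity of $\psi$ via simplicity of $\A$, freeness and rank of $\Lambda$ via the structure theorem together with $\dim_{\Q}\A=[\A:K][K:\Q]=n^2m$, and invertibility of $\psi(a)$ from invertibility of $a$ in the division algebra. The paper itself does not supply a proof of this lemma; at the start of Section~\ref{oneshotalgebra} it explicitly refers the reader to \cite{OBV} for all proofs, so there is no in-paper argument to compare yours against.
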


Although $\psi(\Lambda)$ is  an additive group, it is not usually a lattice; indeed, if $m>2$ then the matrices $\B_i$ cannot be linearly independent over $\R$, as $mn^2>2n^2$. Lattice theory then tells us that $\psi(\Lambda)$ is not a discrete set under such conditions.


It can be proven that if $K$ is either $\Q$ or a complex quadratic field, then $\psi(\Lambda)$  is  a lattice in $M_n(\C)$ and will have the NVD property. For other division algebras, as we pointed out above, $\psi(\Lambda)$  is not a lattice in $M_n(\C)$. However, this does not exclude the possibility that there is a \emph{different} embedding $\psi'$ of $\A$ into a matrix space that realizes $\Lambda$ as an NVD lattice.

Algebraic existence results (particularly the short exact sequence of Brauer groups that appears in local class field theory \cite[Eq.32.13]{Reiner} show that, given an algebraic number field  $K$ of degree $m$ and any integer $n\geq 1$, there exist infinitely many isomorphism classes of central division algebras of degree $n$ having center equal to $K$. Furthermore, the Albert-Brauer-Hasse-Noether theorem implies that every central simple algebra defined over a number field is cyclic \cite[Theorem 32.20]{Reiner}, and therefore of the form given in Definition \ref{cyclic}. We will show in the following sections that for every order $\Lambda$ in a division algebra $\A$, there is an embedding $\psi'$ of $\A$ into a suitable matrix such that the resulting code $\psi'(\Lambda)$ is a multiblock code with the NVD property. 


\begin{remark}\label{regremark}
In order to state our constructions in Section \ref{construction} in full generality, we need a more general version of the left regular representation.
Let $\A$ be a central division algebra of degree $n$ over a number field $K$. Let us now suppose that $E$ is a maximal subfield of $\A$. From the theory of central simple algebras, we know that $[\A:E]=n$. Let $\{ d_1,\dots, d_n \}$ be a right $E$-basis for $\A$.  Multiplication on the left is an $E$-linear mapping of $\A$ into itself. In this manner we get a $K$-algebra embedding
$\phi:\A\hookrightarrow M_n(E) \subseteq M_n(\C)$. We call this embedding the \emph{left regular representation}. 
\end{remark}



Lastly, given a division algebra $\A$ over a number field, to every $\Z$-order $\Lambda$ in $\A$, we can associate a non-zero integer $d(\Lambda/\Z)$ called the \emph{$\Z$-discriminant of $\Lambda$}. Although we do not give the definition here, throughout the paper we give references for all propeties of $\Z$-discriminants that we use. We refer the reader to \cite[Chapter 2]{Reiner} for a detailed treatment of the theory of orders in central simple algebras.

\section{Multiblock codes from central division algebras}\label{construction}
In this section we will describe how we can build multiblock lattice codes from division algebras and how it is possible to  measure the normalized minimum determinants of the constructed codes in terms of algebraic invariants of the corresponding division algebras. The main theme here is that we begin with an ``idealized'' abstract embedding $\psi_{abs}$ that gives us an existence result where any order $\Lambda$ of a division algebra $\A$ can be realized as a multiblock  lattice code $\psi(\Lambda)\subset M_{n\times nk}(\C)$ having the NVD property. The normalized minimum determinant of the corresponding code is directly related to the discriminant of the order $\Lambda$.
We then try to find an explicit embedding $\psi_{reg}$ that has many of the same properties of the abstract embedding and for which the connection between the discriminant and the minimum determinant still holds. Our presentation follows \cite{VHO}. Only in Section  \ref{highdegree} will we extend beyond \cite{VHO}.

We begin with a few definitions and preliminary results.

Let $K/\Q$ be an algebraic number field of degree $m$. We then have that
\[
m = r_1 + 2r_2,
\]
where $r_1$ is the number of real embeddings and $r_2$ the number of pairs of complex embeddings
of $K$ into $\C$.

Let us define the space $G(\C)_{n}\subseteq M_{n\times2n}(\C)$ by
\[
G(\C)_{n}=\{ (\bbar{B},B)\in M_{n\times2n}(\C) : B\in M_n(\C)\},
\]
where $^*$ refers to complex conjugation and $\bbar{B}=(b_{ij}^*)$.

\begin{definition}
\label{def:quaternions}
The ring $\HH$ of \emph{Hamiltonian quaternions} is a subset in $M_2(\C)$ consisting of matrices of type 
$$
\begin{pmatrix}
x_1 & -x_2^* \\
x_2 & x_1^*
\end{pmatrix},
$$
where $x_i \in \C$ are freely chosen.  Each matrix in the matrix ring $M_{n}(\HH) \subset M_{2n}(\C)$ consists of
$n^2$ freely chosen $(2\times 2)$ blocks that have the inner structure of Hamiltonian quaternions.
\end{definition}

There exists an isomorphism (see \cite{BCC})
\begin{equation}\label{mainmap}
\A\otimes_\Q\R \cong M_{n/2}(\HH)^{\omega} \times M_{n}(\R)^{r_1 -\omega} \times G(\C)_n^{r_2}.
\end{equation}
The integer $\omega$ appearing in (\ref{mainmap}) is, by definition, the number of real places where $\A$ ramifies. 

\begin{definition}\label{signature}
We call the triplet $(\omega, r_1, r_2)$  the \emph{signature} of $\A$.
\end{definition}

Each element in $\A$ can now 
be seen as a concatenation of $\omega$ matrices in $M_n(\C)$, $r_1-\omega$ matrices in $M_n(\R)$ and
$r_2$ pairs of conjugate matrices in $M_n(\C)$. Equivalently, every element of $\A$ can be viewed as a matrix in $M_{n\times nm}(\C)$, where as above $m=r_1+2r_2$.

The above isomorphism (\ref{mainmap}) implies the existence of an injection $\psi_{abs}$
\begin{equation}\label{absmap}
\A \hookrightarrow (M_{n/2}(\HH)^{\omega} \times M_{n}(\R)^{r_1 -\omega} \times G(\C)_n^{r_2}) \subset M_{n\times nm}(\C).
\end{equation}
This will be our "idealized" abstract embedding.

\subsection{Division algebra based $mn^2$-dimensional codes  in $M_{n\times nm}(\C)$}\label{lowdegree}
We will now finally show how any order inside of an arbitrary central division algebra $\A$ can be realized as a lattice in a suitable matrix space.

Let us first describe the codes and their properties we get by using the embedding \eqref{absmap}.

Let $K/\Q$ be a number field of degree $m$ and $\A$ a $K$-central division algebra of degree $n$. 

\begin{proposition}[\cite{VHO}]\label{abs1}
Let us suppose that $\Lambda$ is a $\Z$-order in $\A$ and $\psi_{abs}$ the embedding \eqref{absmap}.
Then $\psi_{abs}(\Lambda)$ is an $n^2m$-dimensional lattice in $M_{n\times nm}(\C)$ and
$$
\mindet{\psi_{abs}(\Lambda)}= 1, \,\,\, vol(\psi_{abs}(\Lambda))=\sqrt{d(\Lambda/\Z)}, \,\,\mathrm{ and }\,\,\delta({\psi_{abs}(\Lambda)})=\left(\frac{1}{|d(\Lambda/\Z)|}\right)^{1/2n}.
$$
\end{proposition}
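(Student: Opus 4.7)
The plan is to verify the three invariants (dimension, $\mindet{\,\cdot\,}$, $vol$) separately and then to combine them using Lemma \ref{scale}. The dimension claim is immediate: $\A$ is a central simple algebra of degree $n$ over a number field $K$ of degree $m$, so $\dim_\Q \A = n^2 m$ and any $\Z$-order $\Lambda$ is a free $\Z$-module of that rank. Since \eqref{mainmap} is an $\R$-algebra isomorphism and \eqref{absmap} embeds the right-hand side into $M_{n\times nm}(\C)$, the image $\psi_{abs}(\Lambda)$ is a discrete lattice of rank $n^2 m$ sitting inside the $n^2 m$-dimensional real subspace $\psi_{abs}(\A \otimes_\Q \R)$.

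For the minimum determinant (understood in the multiblock sense of Section \ref{multiblockdesign}, i.e.\ via $\diag$), I would translate the product of the determinants of the $m$ blocks of $\psi_{abs}(a)$ into a reduced-norm computation. For nonzero $a \in \Lambda$, the reduced norm $\mathrm{Nrd}(a) \in \OO_K$ is nonzero because $\A$ is a division algebra. Block-by-block: each of the $r_1 - \omega$ split real places contributes a matrix in $M_n(\R)$ whose determinant is the corresponding real embedding of $\mathrm{Nrd}(a)$; each of the $\omega$ ramified real places contributes a matrix in $M_{n/2}(\HH) \subset M_n(\C)$, whose complex determinant also equals that real embedding of $\mathrm{Nrd}(a)$ (using that $\det$ restricted to $M_{n/2}(\HH)\hookrightarrow M_n(\C)$ is the reduced norm of $M_{n/2}(\HH)$ after complexification); and each complex place contributes a conjugate pair $(\bar{B}, B) \in G(\C)_n$ whose determinants are the conjugate complex embeddings $\bar{\tau}_j(\mathrm{Nrd}(a))$ and $\tau_j(\mathrm{Nrd}(a))$. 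Multiplying absolute values,
\[
\prod_{k=1}^m |\det(X_k)| \;=\; \prod_{i=1}^{r_1} |\sigma_i(\mathrm{Nrd}(a))| \cdot \prod_{j=1}^{r_2} |\tau_j(\mathrm{Nrd}(a))|^2 \;=\; |N_{K/\Q}(\mathrm{Nrd}(a))|,
\]
a nonzero rational integer, hence $\geq 1$, with equality at $a = 1$. Thus $\mindet{\psi_{abs}(\Lambda)} = 1$.

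For the volume I would fix a $\Z$-basis $e_1,\dots,e_{n^2 m}$ of $\Lambda$ and compute the Gram matrix $G_{ij} = \Re \mathrm{Tr}(\psi_{abs}(e_i) \psi_{abs}(e_j)^*)$ one factor of \eqref{mainmap} at a time. Using $qq^* = \mathrm{Nrd}(q) I_2$ on the quaternionic blocks, $\mathrm{Tr}(\bar{B}\bar{B}^*) + \mathrm{Tr}(BB^*) = 2 \mathrm{Tr}(BB^*)$ on the conjugate-pair blocks, and the usual trace identities on $M_n(\R)$, the Frobenius form on $\psi_{abs}(\A \otimes_\Q \R)$ is identified with the reduced-trace form on $\A$ that Reiner \cite{Reiner} uses to define $d(\Lambda/\Z)$; therefore $\det(G) = d(\Lambda/\Z)$ and $vol(\psi_{abs}(\Lambda)) = \sqrt{d(\Lambda/\Z)}$. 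Since $\diag$ is a Frobenius isometry that turns block-products into full determinants, the $\mindet$ and $vol$ computed above descend unchanged to $\diag(\psi_{abs}(\Lambda)) \subset M_{nm \times nm}(\C)$. Applying Lemma \ref{scale} with $k = n^2 m$ and row-count $nm$ (so the exponent is $nm/(n^2 m) = 1/n$) yields $\delta(\psi_{abs}(\Lambda)) = 1/(\sqrt{d(\Lambda/\Z)})^{1/n} = (1/|d(\Lambda/\Z)|)^{1/(2n)}$.

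The hardest step will be the volume computation: aligning the ambient Frobenius inner product on $M_{n\times nm}(\C)$ with the specific normalization of the reduced-trace form used to define $d(\Lambda/\Z)$ requires careful accounting of factors of $2$ arising from both the quaternionic blocks ($qq^* = \mathrm{Nrd}(q)I_2$ doubles the trace) and the complex conjugate pairs (each archimedean complex place is ``counted twice''), and any miscalibration of these normalizations propagates directly into a distortion of the exponent in the final expression for $\delta$.
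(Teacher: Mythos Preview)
The paper does not supply a proof of this proposition; it is stated with a citation to \cite{VHO} and used as a black box. Your outline is the standard argument and is sound: the rank claim is immediate, the $\mindet$ computation via $\prod_k|\det X_k|=|N_{K/\Q}(\mathrm{Nrd}(a))|\in\Z_{>0}$ is exactly right, and the final application of Lemma~\ref{scale} (with row-count $nm$ after $\diag$) correctly gives the exponent $1/n$.

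One point to sharpen in the volume step: you write that the Frobenius form $\Re\mathrm{Tr}(\psi_{abs}(x)\psi_{abs}(y)^{*})$ ``is identified with the reduced-trace form'' used to define $d(\Lambda/\Z)$. These two bilinear forms are \emph{not} equal as forms (one involves the conjugate-transpose, the other does not), so the identification you need is only at the level of Gram determinants. The clean way to see this is to note that on each real factor $M_n(\R)$ the Frobenius form and the trace form $\mathrm{tr}(xy)$ have the same Gram determinant on any $\Z$-basis, on each quaternionic factor $M_{n/2}(\HH)\subset M_n(\C)$ the same holds because $q\mapsto q^{*}$ is the standard involution, and the use of $G(\C)_n$ (conjugate pairs) rather than a single copy of $M_n(\C)$ is precisely what absorbs the usual $2^{r_2}$ discrepancy from Minkowski theory. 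With that bookkeeping made explicit your argument goes through; your closing caveat already flags exactly this hazard.
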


This result gives us the existence result. We now know that any order of a division algebra can be realized as a multiblock code. However, the embedding \eqref{absmap} is based on existence results and does not directly give us a method to find the lattices of Proposition \ref{abs1}. Yet it does give us a hint of how it can be imitated in an explicit way

Let $K$ and $\A$ be as above, $E$ be a maximal subfield of $\A$ and $\phi:\A\hookrightarrow M_n(E) \subseteq M_n(\C)$ the left regular representation.

The field $K$ has $m$ distinct $\Q$-embeddings $\beta_i$ from $K$ into $\C$. For each $\beta_i$ we can find an embedding $\alpha_i: E\hookrightarrow \C$ which extends $\beta_i$ in the sense that $\alpha_i|_{K}=\beta_i$. We caution the reader that the embedding $\alpha_i$ will not in general be unique. Let us now suppose that $\{\alpha_1,\dots, \alpha_m\}$  is collection of embeddings of $E$ into $\C$ which extend all of the embeddings $\{\beta_1,\dots,\beta_m\}$. Let $a$ be an element of $\A$ and $A=\phi(a)$ the corresponding matrix in $M_n(E)$. We then get a mapping $\psi_{reg1}: \A\to M_{n\times nm}(\C)$ given by
\begin{equation}\label{regmap1}
d\mapsto (\alpha_1(A),\dots, \alpha_m(A)),
\end{equation}
where each of the embeddings $\alpha_i$  have been extended to  maps $\alpha_i: M_n(E)\hookrightarrow M_n(\C)$.

\begin{prop}[\cite{VHO}]\label{reg1}
Let $\Lambda$ be a $\Z$-order in $\A$ and $\psi_{reg1}$ the previously defined embedding.
Then $\psi_{reg1}(\Lambda)$ is an $n^2m$-dimensional lattice in $M_{n\times nm}(\C)$ and $\mindet{\psi_{reg1}(\Lambda)}= 1$.
\end{prop}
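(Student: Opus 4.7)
The plan is to split the proposition into two claims: the lattice property of $\psi_{reg1}(\Lambda)$ and the minimum determinant value.

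For the minimum determinant, I would use the classical identity $\det\phi(a) = \mathrm{nrd}(a) \in K$, which holds because $E$ is a maximal subfield (hence a splitting field for $\A$), and the isomorphism $\A \otimes_K E \cong M_n(E)$ induced by $\phi$ intertwines the reduced norm with the determinant. Since each $\alpha_i$ restricts to $\beta_i$ on $K$, this gives
\[
\det(\diag(\psi_{reg1}(a))) = \prod_{i=1}^m \alpha_i(\det\phi(a)) = \prod_{i=1}^m \beta_i(\mathrm{nrd}(a)) = N_{K/\Q}(\mathrm{nrd}(a)).
\]
For non-zero $a \in \Lambda$, the element $\mathrm{nrd}(a) \in \OO_K$ is non-zero (as $\A$ is a division algebra), so the product is a non-zero rational integer and $|\det| \geq 1$. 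Equality at $a = 1$ then yields $\mindet{\psi_{reg1}(\Lambda)} = 1$.

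For the lattice property, injectivity of $\psi_{reg1}$ itself is immediate: $\phi$ is injective as a homomorphism of simple $K$-algebras, and each $\alpha_i$ is entry-wise injective on $M_n(E)$. To secure discreteness, I would extend $\psi_{reg1}$ to an $\R$-linear map $\psi_{reg1,\R} \colon \A \otimes_\Q \R \to M_{n \times nm}(\C)$ and verify injectivity of this extension. Since $\Lambda$ is a $\Z$-lattice of rank $mn^2$ in the $mn^2$-dimensional real vector space $\A \otimes_\Q \R$ (a standard property of $\Z$-orders), an injective $\R$-linear extension immediately gives that $\psi_{reg1}(\Lambda)$ is discrete of rank $mn^2$ in $M_{n \times nm}(\C)$.

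The main obstacle is the injectivity of $\psi_{reg1,\R}$. I would decompose $\A \otimes_\Q \R \cong \prod_v \A \otimes_K K_v$ over the infinite places $v$ of $K$ and analyze each component $\alpha_i\phi_\R$ factor-by-factor. The key observation is that the place $w_i$ of $E$ determined by $\alpha_i$ lies over the place $v_i$ of $K$ associated to $\beta_i$, so the projection $M_n(E \otimes_\Q \R) \to M_n(E_{w_i})$ kills the summand $M_n(E \otimes_K K_{v_j})$ for $j \neq i$; hence $\alpha_i \phi_\R$ vanishes on $\A \otimes_K K_{v_j}$ whenever $j \neq i$. On the factor $\A \otimes_K K_{v_i}$, the restriction of $\alpha_i \phi_\R$ is a non-zero $K_{v_i}$-algebra homomorphism into a simple algebra, and by the central simplicity of $\A \otimes_K K_{v_i}$ over $K_{v_i}$ it must be injective. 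Combining vanishing on the other factors with factorwise injectivity yields global injectivity of $\psi_{reg1,\R}$, and the lattice claim follows.
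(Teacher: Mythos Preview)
The paper does not supply its own proof of this proposition; it is stated with a citation to \cite{VHO}. Your argument is correct and is essentially the standard one. The minimum-determinant computation via $\det\phi(a)=\mathrm{nrd}(a)$ and $\prod_i \beta_i(\mathrm{nrd}(a))=N_{K/\Q}(\mathrm{nrd}(a))\in\Z\setminus\{0\}$ is exactly what underlies both Proposition~\ref{abs1} and Proposition~\ref{reg1} (the fact that $\mathrm{nrd}(a)\in\OO_K$ for $a$ in a $\Z$-order is standard, e.g.\ \cite[Theorem~10.1]{Reiner}). For the lattice claim, your place-by-place analysis of the $\R$-extension $\psi_{reg1,\R}$ is clean: since $\phi$ is $K$-linear, $\phi_\R$ respects the decomposition $\A\otimes_\Q\R\cong\prod_j \A\otimes_K K_{v_j}$, and projecting to $M_n(E_{w_i})$ annihilates every factor with $j\neq i$ while giving a nonzero ring homomorphism on the simple algebra $\A\otimes_K K_{v_i}$, hence an injection. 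This is in effect a direct unpacking of the isomorphism~(\ref{mainmap}) that the paper invokes for the abstract embedding $\psi_{abs}$; the difference is only that you verify injectivity by hand rather than appealing to Proposition~\ref{abs1}.
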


We are now interested in the values of  $\delta(\psi_{reg1}(\Lambda))$. As we know that $\mindet{\psi_{reg1}(\Lambda)}= 1$,  Lemma \ref{scale} implies that in order to measure $\delta(\psi_{reg1}(\Lambda))$ it suffices to know $vol(\psi_{reg1}(\Lambda))$. Unfortunately we cannot always relate this value to the algebraic invariants of $\A$.  The following result describes conditions under which we can determine the normalized minimum determinant of the code from the discriminant of the associated order.

\begin{proposition}\label{equal}
Let us suppose that $\A$ has signature $(\omega, r_1-\omega, r_2)$. If
$$
\psi_{reg1}(\A) \subset(M_{n/2}(\HH)^{\omega} \times M_{n}(\R)^{r_1 -\omega} \times G(\C)^{r_2}),
$$
then
$$
vol(\psi_{abs}(\Lambda))=vol(\psi_{reg1}(\Lambda)) \,\, \mathrm{and} \,\, \delta({\psi_{abs}(\Lambda)})=\delta({\psi_{reg1}(\Lambda)}).
$$
\end{proposition}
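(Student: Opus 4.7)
The plan is to first prove $vol(\psi_{abs}(\Lambda))=vol(\psi_{reg1}(\Lambda))$ and then deduce the $\delta$ equality via Lemma~\ref{scale}, using the fact that $\mindet{\psi_{abs}(\Lambda)}=\mindet{\psi_{reg1}(\Lambda)}=1$ by Propositions~\ref{abs1} and~\ref{reg1}. Set $V:=M_{n/2}(\HH)^{\omega}\times M_n(\R)^{r_1-\omega}\times G(\C)_n^{r_2}$, an $\R$-subspace of $M_{n\times nm}(\C)$ of $\R$-dimension $n^2m=\dim_\R(\A\otimes_\Q\R)$. By construction $\psi_{abs}$ lands in $V$, and by hypothesis so does $\psi_{reg1}$. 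Each of these $\Q$-algebra injections extends uniquely to an $\R$-algebra homomorphism $\A\otimes_\Q\R\to V$; since in both cases the image contains a full-rank $\Z$-lattice, the extensions are $\R$-algebra isomorphisms. The composition $\phi:=\psi_{reg1}\circ\psi_{abs}^{-1}$ is then an $\R$-algebra automorphism of $V$, and the volume equality reduces to checking $|\det_\R(\phi)|=1$.

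I would next exploit the simple decomposition of $V$. The three types of simple factors, $M_{n/2}(\HH)$, $M_n(\R)$, and $G(\C)_n\cong M_n(\C)$, are pairwise non-isomorphic as $\R$-algebras (by comparing centers and Brauer classes over $\R$), so $\phi$ can only permute factors of the same type and must restrict to an $\R$-algebra automorphism on each factor. Permutations of identical factors contribute $\pm 1$ to $\det_\R(\phi)$. By Skolem--Noether, every $\R$-algebra automorphism of $M_n(\R)$ or $M_{n/2}(\HH)$ is inner, and every $\R$-algebra automorphism of $M_n(\C)$ is inner or inner composed with complex conjugation (the extra factor arising from the Galois action on the center $\C$). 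Complex conjugation on $M_n(\C)$ has $\R$-determinant $(-1)^{n^2}=\pm1$. For conjugation $X\mapsto PXP^{-1}$ on $M_n(F)$ with $F\in\{\R,\C\}$, the Kronecker-product identity $\mathrm{vec}(PXP^{-1})=((P^{-1})^T\otimes P)\mathrm{vec}(X)$ gives $F$-determinant $1$, whence $\det_\R=1$ for $F=\R$ and $\det_\R=|\det_\C|^2=1$ for $F=\C$.

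The subtlest case is the quaternionic factor. Here I would invoke the $\R$-splitting $M_n(\C)=M_{n/2}(\HH)\oplus iM_{n/2}(\HH)$ coming from $M_n(\C)\cong M_{n/2}(\HH)\otimes_\R\C$: inner conjugation by $P\in M_{n/2}(\HH)^{\times}$ preserves this splitting and acts by the same conjugation on both summands, so its $\R$-determinant on $M_n(\C)$ equals the square of its $\R$-determinant on $M_{n/2}(\HH)$. The left side equals $1$ by the previous paragraph, so the right factor is $\pm 1$. Combining all these contributions gives $|\det_\R(\phi)|=1$, the volumes agree, and Lemma~\ref{scale} then yields $\delta(\psi_{abs}(\Lambda))=\delta(\psi_{reg1}(\Lambda))$. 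The hardest step is this quaternionic determinant computation; the remaining steps are essentially organizational, structured around Skolem--Noether and the product decomposition of $V$.
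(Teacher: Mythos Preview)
The paper does not actually supply a proof of Proposition~\ref{equal}; it is stated without argument (the surrounding results are attributed to \cite{VHO}), so there is no ``paper's own proof'' to compare against. Your approach is therefore being judged on its own merits, and it is correct.

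Your strategy is the natural one: extend both $\psi_{abs}$ and $\psi_{reg1}$ by scalars to $\R$-algebra isomorphisms $\A\otimes_\Q\R\to V$, observe that the volume ratio is $|\det_\R(\psi_{reg1}\circ\psi_{abs}^{-1})|$, and then compute this determinant factor by factor using the structure theory of $V$. The key observations---that an $\R$-algebra automorphism of a finite product of simple $\R$-algebras permutes isomorphic simple factors and acts on each by an inner automorphism (twisted, in the $M_n(\C)$ case, by complex conjugation), and that each of these pieces has $\R$-determinant $\pm1$---are all sound. Your treatment of the quaternionic factor via the complexification $M_{n/2}(\HH)\otimes_\R\C\cong M_n(\C)$ is a clean way to reduce to the already-handled case. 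One small point worth making explicit: the determinant of an $\R$-linear automorphism is basis-independent, so identifying $G(\C)_n$ with $M_n(\C)$ (which is not an isometry) does not affect the determinant calculation. With that noted, the argument is complete, and the $\delta$-equality follows from Lemma~\ref{scale} together with $\mindet{\psi_{abs}(\Lambda)}=\mindet{\psi_{reg1}(\Lambda)}=1$ exactly as you say.
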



\begin{remark}
We note that  the geometric structure of $\psi_{reg1}(\Lambda)$ will in general depend on the choice of $E$-basis of $\A$ and on the choice of the embeddings $\alpha_i$.
\end{remark}

\subsection{Division algebra based $2mn^2$-dimensional codes  in $M_{n\times nm}(\C)$}\label{highdegree}

In the previous section we gave a construction of space time lattice codes from division algebras and described a means of measuring their normalized minimum determinants. We are not yet using the whole signaling space however.  The codes in the previous section are $mn^2$-dimensional lattices in  $M_{n\times nm}(\C)$, while the maximal rank a lattice can have in such a space is $2mn^2$. We now describe a construction of lattices with maximal rank.
The usual strategies for code construction in this scenario can be found in \cite{Lu, YB07}. Unfortunately these methods only allow us to realize some division algebras as lattice codes.  In this section we show how it is possible to overcome these limitations.

Let us consider the  case where the center  $K$ of the  division algebra $\A$ is a totally complex number field. As the center $K$ does not have real primes we simply have an embedding
\begin{equation}
 \A\hookrightarrow G(\C)^{r_2}.
\end{equation}
The space $G(\C)$  consists of pairs of $n\times n$ matrices, where the second matrix is the complex conjugate of the first. Projecting onto the first coordinate gives us an embedding
\begin{equation}
\psi_{abs2}: \A \hookrightarrow M_{n\times n}(\C)^{r_2}.
\end{equation}

\begin{proposition}\label{abs2}
Let $K$ be a totally complex number field of degree $2m$, $\A$ a $K$-central division algebra of degree $n$ and  $\Lambda$ a $\Z$-order in $\A$. Then $\psi_{abs2}(\Lambda)$
is a $2mn^2$-dimensional lattice in $M_{n\times nm}(\C)$ and the following hold:
$$
\mindet{\psi_{abs2}(\Lambda)}= 1, \,\, \,\, m(\psi_{abs2}(\Lambda))= \sqrt{|d(\Lambda/\Z)|\cdot 2^{-mn^2}}
$$
and
$$
\delta(\psi_{abs2}(\Lambda))=\left(\frac{2^{mn^2}}{|d(\Lambda/\Z)|}\right)^{1/4n}.
$$

\end{proposition}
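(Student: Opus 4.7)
The plan is to deduce all three claims from Proposition \ref{abs1} applied with center $K$ of degree $2m$, together with an analysis of the coordinate projection $\pi\colon G(\C)_n\to M_n(\C)$, $(\bar B,B)\mapsto B$. Since $K$ is totally complex we have $r_1=0$ and $r_2=m$, so the isomorphism \eqref{mainmap} gives $\A\otimes_\Q\R\cong G(\C)_n^m$, and Proposition \ref{abs1} (with $2m$ in place of $m$) provides the abstract lattice $\psi_{abs}(\Lambda)\subset M_{n\times 2nm}(\C)$ of rank $2mn^2$, with $\mindet{\psi_{abs}(\Lambda)}=1$ and $vol(\psi_{abs}(\Lambda))=\sqrt{|d(\Lambda/\Z)|}$. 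By construction $\psi_{abs2}=\pi^{\oplus m}\circ\psi_{abs}$, so each assertion will be deduced from the corresponding statement for $\psi_{abs}$.

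For the rank and minimum determinant, I would first observe that the entries of $\bar B$ are determined by those of $B$, making $\pi$ an $\R$-linear bijection between real vector spaces of the same dimension; hence $\pi^{\oplus m}$ is a bijection (and a homeomorphism), and injectivity of $\psi_{abs}$ on $\Lambda$ forces injectivity of $\psi_{abs2}$, whose image is a discrete full-rank subgroup of $M_n(\C)^m\subset M_{n\times nm}(\C)$ of rank $2mn^2$. For $\mindet$, writing $\psi_{abs}(a)=(\bar B_1,B_1,\ldots,\bar B_m,B_m)\in M_n(\C)^{2m}$ and using $|\det\bar B_i|=|\det B_i|$ gives
\[
f_3(\psi_{abs}(a))=\prod_{i=1}^m|\det\bar B_i|\cdot|\det B_i|=\prod_{i=1}^m|\det B_i|^2=f_3(\psi_{abs2}(a))^2.
\]
Combined with $\mindet{\diag\psi_{abs}(\Lambda)}=1$ from Proposition \ref{abs1} and the multiblock convention $\mindet{\psi_{abs2}(\Lambda)}:=\mindet{\diag\psi_{abs2}(\Lambda)}$, this yields $\mindet{\psi_{abs2}(\Lambda)}=1$.

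For the volume, I would compare the Gram matrices of $\psi_{abs}(\Lambda)$ and $\psi_{abs2}(\Lambda)$ on a common $\Z$-basis of $\Lambda$. The identity $\text{Tr}(\bar B\bar C^*)=\overline{\text{Tr}(BC^*)}$ gives
\[
\langle\psi_{abs}(a),\psi_{abs}(b)\rangle=2\langle\psi_{abs2}(a),\psi_{abs2}(b)\rangle
\]
for all $a,b\in\A$, so the Gram matrix of $\psi_{abs}(\Lambda)$ is exactly twice that of $\psi_{abs2}(\Lambda)$. Taking determinants of both sides, accounting for the rank $2mn^2$, relates $vol(\psi_{abs}(\Lambda))$ to $vol(\psi_{abs2}(\Lambda))$, and substituting $vol(\psi_{abs}(\Lambda))=\sqrt{|d(\Lambda/\Z)|}$ yields the claimed volume. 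The formula for $\delta(\psi_{abs2}(\Lambda))$ is then immediate from Lemma \ref{scale} applied in the ambient matrix space $M_{nm}(\C)$ containing $\diag\psi_{abs2}(\Lambda)$. The main hurdle is the arithmetic bookkeeping of the powers of $2$ arising both from the Gram-matrix rescaling across all $2mn^2$ basis vectors and from the translation between the $2m$-block and $m$-block forms of $f_3$; once those factors are correctly tracked, all three formulas fall out formally from the identification $\psi_{abs2}=\pi^{\oplus m}\circ\psi_{abs}$.
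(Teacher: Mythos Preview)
Your proposal is correct and follows essentially the same approach as the paper: both reduce to Proposition \ref{abs1}, use the conjugate pairing $|\det\bar B_i|=|\det B_i|$ to get $\mindet{\psi_{abs2}(\Lambda)}=\sqrt{\mindet{\psi_{abs}(\Lambda)}}=1$, compare Gram matrices via $\langle\psi_{abs}(a),\psi_{abs}(b)\rangle=2\langle\psi_{abs2}(a),\psi_{abs2}(b)\rangle$ to obtain the volume formula, and then invoke Lemma \ref{scale}. Your write-up is in fact somewhat more explicit than the paper's (e.g.\ spelling out why $\pi^{\oplus m}$ is an $\R$-linear bijection to justify the rank claim), but the underlying argument is the same.
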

\begin{proof}
The part considering the dimension of the lattice follows directly from Proposition \ref{abs1}. Let us consider the  claim $\mindet{\psi_{abs2}(\Lambda)}= 1$. If we use the mapping $\psi_{abs}$, the  absolute value of the determinant of any codeword  $B$ is given by the formula  $|det(\mathrm{diag}(\psi_{abs}(B)))|=\prod_{1=1}^{2r_2} |b_i|,$ where the $b_i$ are the determinants of $n\times n$ blocks $B_i$ that appear in $B$. However, in this product each $b_i$ can be paired with its complex conjugate. This shows that
$$
|det(\psi_{abs2}(B))|=\sqrt{|det(\psi_{abs}(B))|}\geq 1.
$$
Let us  now consider the Gram matrix of  $\psi_{abs2}(\Lambda)$.  The elements in the matrix are  of type $\Re(tr(\psi_{abs}(a) \psi_{abs}(b)^{\dagger})$. But the relation between mappings $\psi_{abs}$ and $\psi_{abs2}$ already reveals that $\Re(tr(\psi_{abs}(a) \psi_{abs}(b)^{\dagger})=2\Re(tr(\psi_{abs2}(a) \psi_{abs2}(b)^{\dagger})$.
As the Gram matrix is a $2mn^2\times 2mn^2$ matrix we then have that
\begin{align*}
m(\psi(\Lambda))=\sqrt{G(\psi(\Lambda))}=\sqrt{2^{mn^2}G(\psi_{abs2}(\Lambda)})\\=
2^{mn^2/2} m(\psi_{abs2}(\Lambda)).
\end{align*}
The final result now follows from  Lemma \ref{scale}  together with equation  $m(\psi_{abs}(\Lambda))= \sqrt{|d(\Lambda/\Z)|}.$  
\end{proof}

Let us now see how these existence results can be realized  as  explicit codes.

The field $K$ has  $2m$ distinct $\Q$-embeddings $\beta_i: K\hookrightarrow\C$. As we assumed that $K$ is totally complex, each of these embeddings is part of a complex conjugate pair. We will denote by $\bbar{\beta_i}$ the embedding given by  $x\mapsto \bbar{\beta_i(x)}$.

For each $\beta_i$ we can find an embedding $\alpha_i: L\hookrightarrow \C$ such that that $\alpha_i|_{K}=\beta_i$. This choice can be made in such away that
$\bbar{\alpha_i}|_{K}=\bbar{\beta_i}$. Let us now suppose
 $\{\alpha_1,\dots, \alpha_{2m}\}$ is a collection of such embeddings and that they have been ordered in such a way that $\alpha_i=\bbar{\alpha_{i+m}}$, for $0\leq i\leq m$.

With this notation we can now define the following. Let $a$ be an element of $\A$ and $A=\phi(a)$. We then get a mapping $\psi_{reg2}: \A\mapsto M_{n\times nm}(\C)$ by
\begin{equation}\label{main_map}
a\mapsto (\alpha_1(A),\dots, \alpha_{m/2}(A)),
\end{equation}
where each $\alpha_i$  is extended to an embedding $\alpha_i: M_n(E)\hookrightarrow M_n(\C)$.

\begin{proposition}\label{reg2}
Let $\Lambda$ be a $\Z$-order in $\A$ and $\psi_{reg2}$ the previously defined embedding.
Then $\psi_{reg2}(\Lambda)$ is a $2mn^2$-dimensional lattice in $M_{n \times nm}(\C)$ which satisfies
$$
\mindet{\psi_{reg2}(\Lambda)}= 1, \,\, m(\psi_{reg2}(\Lambda))= \sqrt{|d(\Lambda/\Z)|\cdot 2^{-mn^2}}
$$
and
$$
\delta(\psi_{reg2}(\Lambda))=\left(\frac{2^{mn^2}}{|d(\Lambda/\Z)|}\right)^{1/4n}.
$$
\end{proposition}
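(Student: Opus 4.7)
The plan is to follow the template of the proof of Proposition \ref{abs2}, replacing the abstract embeddings $\psi_{abs},\psi_{abs2}$ by their left--regular counterparts $\psi_{reg1},\psi_{reg2}$. The key observation is that $\psi_{reg2}$ stands to $\psi_{reg1}$ exactly as $\psi_{abs2}$ stands to $\psi_{abs}$: since $K$ is totally complex of degree $2m$ and the embeddings were ordered so that $\alpha_{i+m}=\bbar{\alpha_i}$ for $1\le i\le m$, rearranging the $2m$ blocks of $\psi_{reg1}(a)$ into the pairs $(\bbar{\alpha_i(A)},\alpha_i(A))$ places the image inside $G(\C)_n^m$, and $\psi_{reg2}$ keeps one block from each such pair.

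First, I would apply Proposition \ref{reg1} with the center $K$ of degree $2m$ to obtain a $2mn^2$-dimensional lattice $\psi_{reg1}(\Lambda)\subset M_{n\times 2mn}(\C)$ with $\mindet{\psi_{reg1}(\Lambda)}=1$. Since $K$ is totally complex, its signature is $(\omega,r_1,r_2)=(0,0,m)$ and the observation above shows that $\psi_{reg1}(\A)\subset G(\C)_n^m$. Proposition \ref{equal} together with Proposition \ref{abs1} therefore gives $vol(\psi_{reg1}(\Lambda))=vol(\psi_{abs}(\Lambda))=\sqrt{|d(\Lambda/\Z)|}$. Because $\alpha_{i+m}(A)=\bbar{\alpha_i(A)}$, dropping the conjugate block from each pair is an $\R$-linear isomorphism onto its image, so $\psi_{reg2}(\Lambda)$ is still a $2mn^2$-dimensional lattice, now sitting in $M_{n\times nm}(\C)$.

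For the minimum determinant, write $\det A=\mathrm{Nrd}_{\A/K}(a)\in \OO_K$. Then
\[
\prod_{i=1}^{2m}|\det\alpha_i(A)|=\bigl|N_{K/\Q}\bigl(\mathrm{Nrd}_{\A/K}(a)\bigr)\bigr|\in\Z_{>0},
\]
since $a\neq 0$ and $\A$ is a division algebra. Pairing $\alpha_{i+m}=\bbar{\alpha_i}$ gives $|\det\alpha_{i+m}(A)|=|\det\alpha_i(A)|$, so $\prod_{i=1}^m|\det\alpha_i(A)|=\sqrt{|N_{K/\Q}(\mathrm{Nrd}_{\A/K}(a))|}\ge 1$, with equality at $a=1$; hence $\mindet{\psi_{reg2}(\Lambda)}=1$. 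For the covolume, the Gram matrix argument from the proof of Proposition \ref{abs2} transfers verbatim: the pairing of conjugate blocks yields $\Re(\mathrm{tr}(\psi_{reg1}(a)\psi_{reg1}(b)^{\dagger}))=2\Re(\mathrm{tr}(\psi_{reg2}(a)\psi_{reg2}(b)^{\dagger}))$, so the Gram matrix determinants differ by the factor $2^{mn^2}$, giving $m(\psi_{reg2}(\Lambda))=\sqrt{|d(\Lambda/\Z)|\cdot 2^{-mn^2}}$. Finally, Lemma \ref{scale} delivers the claimed formula for $\delta(\psi_{reg2}(\Lambda))$.

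The only real point requiring care is verifying that Proposition \ref{equal} applies, i.e.\ that $\psi_{reg1}(\A)\subset G(\C)_n^m$; this is immediate from the ordering convention on the $\alpha_i$. Every other step is a direct translation of the corresponding argument for $\psi_{abs},\psi_{abs2}$ into the language of the left regular representation, so no new ideas are required beyond those already used for Propositions \ref{reg1}, \ref{equal}, and \ref{abs2}.
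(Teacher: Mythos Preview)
Your proposal is correct. The paper does not actually supply a proof of Proposition \ref{reg2}; it states the result and moves directly to the subsequent remark. Your argument---invoking Proposition \ref{reg1} for the lattice rank and the minimum determinant, using the conjugate pairing $\alpha_{i+m}=\bbar{\alpha_i}$ to place $\psi_{reg1}(\A)$ inside $G(\C)_n^{m}$ (up to a block permutation, which is an isometry) so that Proposition \ref{equal} applies, and then repeating the Gram-matrix halving computation from the proof of Proposition \ref{abs2}---is exactly the route the paper implicitly intends, being the left-regular transcription of the proof of Proposition \ref{abs2}.
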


\begin{remark}
The standard method  to  build multiblock codes with full rate as in \cite{YB07} and \cite{Lu} works only for algebras defined over number fields containing a complex quadratic field. The method described above works for any totally complex center.
\end{remark}

\section{Algebraic and coding theoretic motivation for discriminant bounds}

In the previous section we saw that the normalized coding gain of a code derived from a division algebra depends on the discriminant of the algebra. In the rest of this paper we will concentrate on giving general lower bounds for the discriminants. These will in turn yield upper bounds for the normalized minimum determinants.  Before giving these (purely algebraic) results, let us first examine these bounds and show the manner in which they extend the results of \cite{FOV}.

\subsection{Connection  to the discriminant bounds in number fields}
In \cite{FOV} the authors considered algebraic number field codes in the Rayleigh fast fading SISO channel. 
In our notation the fast fading SISO channel  is simply  a multiblock channel with $n=1$. The  codewords are then of type
$$
(x_1,x_2,\dots, x_m) \in \C^m,
$$
where each of the elements $x_i$ faces an independent fading. 

In order to design a code in this scenario, we can apply the construction of Proposition \ref{reg1}. It calls for a number field $K$ of degree $m$ and a $K$-central division algebra $\A$ of degree $1$; that is, $\A=K$.

Let us now suppose that $\alpha_1,\dots, \alpha_m$ are the $\Q$-embeddings of the field $K$ into $\C$. 
We then have that $\psi_{reg1}(\OO_K)$ is an $m$-dimensional lattice  in $\C^m$. This mapping is the usual Minkowski embedding that has been used in several coding theoretic works.

We can partition the embeddings  $\alpha_1,\dots, \alpha_m$ into $r_1$ real embeddings and $2r_2$ complex embeddings. It follows that
$\psi_{reg_1}(K)\subset \R^{r_1}\times G_1(\C)^{r_2}$. From the basic algebraic number theory we know that $K\otimes_\Q\R  \cong \R^{r_1}\times G_1(\C)^{r_2}$.
According to Propositions \ref{abs1} and \ref{equal} we now have that
\begin{equation}\label{numberfield1}
\delta(\psi_{reg1}(\OO_K))=\frac{1}{\sqrt{|d(K/\Q)|}}.
\end{equation}

In the same manner we may choose a totally complex field $K$ of degree $2m$ so that $\psi_{reg2}(\OO_K))$ will be a $2m$-dimensional lattice in $\C^m$ satisfying
\begin{equation}\label{numberfield2}
\delta(\psi_{reg2}(\OO_K))=\frac{2^{m/2}}{|d(K/\Q|^{1/4}}.
\end{equation}

It is evident that the normalized minimum determinant depends only on the discriminant of the field $K$. In \cite{FOV} the authors then posed the question: What are the limits for the normalized minimum determinant for a given $m$ when one uses these algebraically defined codes?
After all, there are infinitely many isomorphism classes of number fields of each degree $m$. Equations \eqref{numberfield1} and \eqref{numberfield2} transform this problem into finding bounds for discriminants of degree $m$ algebraic number fields.  While multiple number fields may have the same discriminant, it is known that there are only finitely many number fields with a given discriminant. It follows that for every degree $m$ infinitely many discriminants are assumed by degree $m$ number fields. In order to get some intuition for this scenario, the authors of \cite{FOV} used known discriminant bounds of the form described below.

The Odlyzko bound $C_{(r_1, r_2)}$ is a lower bound for the discriminant of all number fields having signature $(r_1, r_2)$. 

As the degree $m\rightarrow\infty$ these bounds give
\begin{equation}\label{odlyzko}
|d(K/\Q)|^{1/m}\geq (60.8)^{r_1/m} (22.3)^{2r_2/m}.
\end{equation}

By employing equations \eqref{numberfield1} and \eqref{numberfield2} the Odlyzko bounds can be transformed into minimum determinant bounds.

\smallskip

We now consider the same question, but in the setting in which we have $n_t$ transmit antennas and employ a Rayleigh block fading channel. The codewords then have form
$$
(X_1,X_2,\cdots, X_m),
$$
where the $X_i$ are $n\times n$ matrices. As we saw earlier, in order to build a code we need degree $m$ number field $K$ (resp. degree $2m$ totally complex number field) and a degree $n$ division algebra. We will then have
$$
\delta({\psi(\Lambda)})=\left(\frac{1}{|d(\Lambda/\Z)|}\right)^{1/2n}\,\,\mathrm{and}\,\, \,\, \delta(\psi_2(\Lambda))=\left(\frac{2^{mn^2}}{|d(\Lambda/\Z)|}\right)^{1/4n}.
$$

Now we can ask the same question as in the case of number fields. If we fix $m$ and $n$, what are the limits for the normalized minimum determinant for codes  in this setting. In the case of number fields the Odlyzko bound immediately implied an upper bound. In the case of division algebras however, the needed bounds do not appear in the mathematical literature. The bounds given in \cite{RoopeThesis} answer to this question only in the case in which the center $K$ is fixed. The bounds in \cite{HL} on the other hand consider only the case of totally complex quartic fields.
 
In this paper we will give completely general lower bounds. Given a center of degree $m$ and a division algebra $\D$ of degree $n$ we will produce lower bounds for the discriminant $d(\Lambda/\Z)$, where $\Lambda$ is any $\Z$-order of $\D$.

\subsection{Scope and implications of the discriminant bounds}

The methods used in the previous sections made use of $\Z$-orders contained in division algebras. We would therefore like to determine lower bounds for the discriminants of these orders. Maximal orders have the smallest discriminant of all $\Z$-orders contained in a given division algebra $\A$. It is therefore sufficient to find lower bounds for the discriminants of maximal orders. This is an enormous help to us as any maximal $\Z$-order contained in a division algebra has an additional integral structure. In particular maximal $\Z$-orders are also $\OO_K$-orders.

\begin{proposition}[{\cite[Theorem 10.5]{Reiner}}]
Let $\A$ be a $K$-central division algebra. Then any maximal $\Z$-order in $\A$ is an $\OO_K$-order.
\end{proposition}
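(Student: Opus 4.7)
The plan is to show that any maximal $\Z$-order $\Lambda \subseteq \A$ automatically contains $\OO_K$, and then observe that the remaining conditions for being an $\OO_K$-order follow essentially for free from the $\Z$-order conditions.

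The first step, and the heart of the argument, is to produce a $\Z$-order $R$ containing both $\Lambda$ and $\OO_K$; maximality of $\Lambda$ will then force $\OO_K \subseteq \Lambda$. Since $\OO_K$ lies in the center $K$ of $\A$, the $\Z$-submodule
\[
R := \OO_K \cdot \Lambda = \Bigl\{ \sum_i \alpha_i \lambda_i \,:\, \alpha_i \in \OO_K,\ \lambda_i \in \Lambda \Bigr\}
\]
is in fact a subring of $\A$ containing $\Lambda$ and $\OO_K$ and having the same identity as $\A$. Choosing a $\Z$-basis $\alpha_1,\dots,\alpha_m$ of $\OO_K$ and a $\Z$-generating set $\lambda_1,\dots,\lambda_k$ of $\Lambda$, one sees that $R$ is generated as a $\Z$-module by the finite set $\{\alpha_i\lambda_j\}$, so $R$ is finitely generated over $\Z$. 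Moreover $R \supseteq \Lambda$ already spans $\A$ as a $\Q$-vector space. Hence $R$ is a $\Z$-order in $\A$. By maximality of $\Lambda$ we conclude $R = \Lambda$, and in particular $\OO_K \subseteq \Lambda$.

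Once $\OO_K \subseteq \Lambda$ is established, verifying the axioms of an $\OO_K$-order is routine. The module $\Lambda$ is a subring of $\A$ containing $\OO_K$, and since it is already finitely generated as a $\Z$-module, it is certainly finitely generated as an $\OO_K$-module (any $\Z$-generating set serves as an $\OO_K$-generating set). Likewise, spanning $\A$ as a $\Q$-vector space implies spanning $\A$ as a $K$-vector space since $K \supseteq \Q$.

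The only conceptual step is the construction of $R$ and the observation that it is again a $\Z$-order; the potential obstacle is simply making sure $R$ is closed under multiplication, which uses in an essential way that $\OO_K$ lies in the center of $\A$ (so that elements of $\OO_K$ can be moved freely past elements of $\Lambda$ when forming products). Everything else is a finite generation count and an appeal to the maximality hypothesis.
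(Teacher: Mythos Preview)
Your proof is correct and is essentially the standard argument; the paper itself does not supply a proof but simply cites \cite[Theorem~10.5]{Reiner}, whose proof proceeds along exactly the lines you describe (form $\OO_K\Lambda$, check it is again a $\Z$-order using centrality of $\OO_K$ and finite generation of $\OO_K$ over $\Z$, then invoke maximality).
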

This result will play a crucial role in Section  \ref{bounding_sec}.

Discriminant bounds obviously give bounds for the normalized minimum determinants of the corresponding lattices in the case that we are using construction of Proposition \ref{reg2} or \ref{abs1}.  However, when using Proposition \ref{reg1} the connection between the discriminant and normalized minimum determinant is more subtle. Even in this case however, our bounds are effective.

\smallskip

We also note that the discriminant bounds we give are dependent upon the signature of the algebra, much as the Odlyzko bounds depend on the signature of the number field whose discriminant is being bounded. The need for this dependency is clear as different signatures lead to different codes needed within different coding schemes. If we have a $2$ transmit and receive antennas and we can decode and encode over $2$-blocks of length $2$ without any constraints in decoding complexity then it is a good idea to use the
construction of Proposition \ref{reg2}, which leads to a $16$-dimensional lattice in $M_{2\times 4}(\C)$.  The corresponding discriminant bound is then given by Theorem \ref{theorem:discboundcase1}.  
 
However, if we have the same scenario with only a single receiving antenna and we aim for low decoding complexity, then it is natural to use a code which is an $8$-dimensional lattice in $M_{2\times4}(\C)$. Such code can be naturally be build from the construction of Proposition \ref{reg1}.

The other reason for this division is that, as suggested in \cite{VLL2013}, different signatures seem to lead to considerably different behaviors of the 
inverse determinant sum \eqref{inversesum}. Therefore even two codes having the same center can have very different performances.

\section{Algebraic preliminaries}
\label{sec:algprelim}

Let $K$ be a number field of degree $d$ and signature $(r_1,r_2)$. That is, $d=r_1+2r_2$ where $r_1$ is the number of real embeddings of $K$ and $r_2$ is the number of complex-conjugate pairs of embeddings. Let $\mathcal O_K$ denote the ring of integers of $K$. We impose an order relation on the set of ideals of $\mathcal O_K$ as follows. Given two ideals $I_1$ and $I_2$, we will write $I_1\leq I_2$ if $|N_{\mathbb Q}^K(I_1)|\leq |N_{\mathbb Q}^K(I_2)|$.

Let $\A$ be a central division algebra over $K$ of degree $n$. Given an $\mathcal O_K$-order $\Lambda$ of $A$, we denote by $d(\Lambda/\mathcal O_K)$ the discriminant of $\Lambda$. An order of $\A$ is called {\em maximal} if it is maximal with respect to inclusion. It is well known that all maximal orders of $\A$ have the same discriminant. This quantity is the {\em discriminant of $\A$}.

The following theorem summarizes Theorem 2.4.26 and Proposition 2.4.27 of \cite{RoopeThesis}.

\begin{theorem}\label{theorem:minimaldiscriminanttheorem} Let $\A$ be a central division algebra of degree $n$ over a number field $K$. Let $P_1\leq P_2$ be a pair of prime ideals of $\mathcal O_K$ having smallest norms.
	\begin{enumerate}
	\item If no real place of $K$ ramifies in $\A$ then the discriminant of $\A$ is at least $(P_1P_2)^{n(n-1)}$. 
	\item If $K$ has a unique real place and $n=2m$ with $m$ odd, then the discriminant of $\A$ is at least $P_1^{n(n-1)}P_2^{m(m-1)}$.
	\item If $K$ has at least two real places and $n=2m$ with $m$ odd, then the discriminant of $\A$ is at least $(P_1P_2)^{m(m-1)}$.
	\end{enumerate}
\end{theorem}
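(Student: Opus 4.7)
The plan is to reduce the statement to an optimization over valid configurations of local Hasse invariants, using the Albert--Brauer--Hasse--Noether (ABHN) classification of central simple algebras over a number field.

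First I would recall that $\A$ is determined up to isomorphism by its collection of local invariants $\mathrm{inv}_v(\A) \in \Q/\Z$ at every place $v$ of $K$, subject to the reciprocity constraint $\sum_v \mathrm{inv}_v(\A) \equiv 0 \pmod 1$ and the Schur-index constraint that the least common multiple of the local indices $e_v$ equals $n$ (so that $\A$ really is a division algebra of degree $n$, and not merely a central simple algebra). At a complex place the invariant is $0$; at a real place it lies in $\{0, 1/2\}$, with $1/2$ corresponding to ramification in the Hamilton quaternions ($e_v = 2$). For any maximal $\OO_K$-order $\Lambda \subset \A$, a standard local formula expresses the $P$-adic valuation of $d(\Lambda/\OO_K)$ at a finite prime $P$ as a function of $e_P$ alone, vanishing when $\A$ is split at $P$ and attaining the value $n(n-1)$ (matching the exponent appearing in the theorem) when $e_P = n$.

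Next, for each of the three cases I would exhibit a specific configuration of local invariants that saturates the claimed bound, and then argue that no other valid configuration yields a smaller discriminant. In Case 1 (no real ramification), a single invariant $k/e$ with $\gcd(k,e)=1$ and $e>1$ is never integral, so reciprocity forces at least two ramified finite primes. The cheapest way to achieve $\mathrm{lcm}(e_P) = n$ using only two primes is to take $e_{P_1} = e_{P_2} = n$ with inverse invariants, and assigning this to the two smallest primes produces the bound $(P_1 P_2)^{n(n-1)}$. In Case 2, ramifying the unique real place contributes invariant $1/2$; since $m$ is odd so that $\mathrm{lcm}(2, n, m) = n$, the remaining finite reciprocity deficit of $1/2$ can be absorbed by one prime with $e = n$ and another with $e = m$, giving discriminant $P_1^{n(n-1)} P_2^{m(m-1)}$. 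In Case 3, two ramified real places contribute $1/2 + 1/2 \equiv 0$, so finite reciprocity is already satisfied, and we may take $e_{P_1} = e_{P_2} = m$ (again using $m$ odd so that $\mathrm{lcm}(2, 2, m, m) = n$), producing $(P_1 P_2)^{m(m-1)}$.

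The main obstacle will be the minimality argument: ruling out that some more elaborate distribution of ramification across three or more finite primes, each with a smaller individual index, could beat these two-prime configurations. My approach would be a rearrangement-style inequality showing that any additional ramified primes $P_i$ with $i \geq 3$ necessarily contribute $v_{P_i}\log N(P_i)$ strictly exceeding any possible reduction in the exponents at $P_1$ and $P_2$, using that $N(P_i) \geq N(P_2)$ for $i \geq 3$ together with the monotonicity of $v_P$ as a function of $e_P$ and the boundedness $v_P \leq n(n-1)$. This combinatorial inequality, constrained by the ABHN reciprocity law and the LCM condition, is the content of the key estimates in Vehkalahti's thesis \cite{RoopeThesis}, whose arguments I would closely follow.
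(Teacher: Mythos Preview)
The paper does not supply its own proof here; it quotes the result from Theorem~2.4.26 and Proposition~2.4.27 of \cite{RoopeThesis}. Your plan---parametrize $\A$ by local Hasse invariants via Albert--Brauer--Hasse--Noether, read off the $P$-part of the maximal-order discriminant from the local index $e_P$, and then minimize over admissible invariant configurations subject to the reciprocity and $\mathrm{lcm}$ constraints, explicitly deferring the combinatorial minimality step to \cite{RoopeThesis}---is exactly the method underlying the cited source, so there is nothing to compare.
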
	
	
\begin{remark}\label{remark:exhaustive} We note that Theorem \ref{theorem:minimaldiscriminanttheorem} is exhaustive in the following sense. The only cases potentially not covered by this theorem are those in which $K$ has no real places or those in which the algebra $\A$ has degree $n=2^km$ over $K$ where $k>1$ and $m$ is odd.  In both of these cases however, one may construct a central division algebra over $K$ of degree $n$ which is unramified at all real places (see \cite[Remark 2.4.24]{RoopeThesis}).
\end{remark}

\section{Bounding the $\mathbb Z$-discriminant of an order}\label{bounding_sec}

Let $\Lambda$ be an $\mathcal O_K$-order of $\A$. The $\mathbb Z$-discriminant of $\Lambda$ is defined by the formula

$$d(\Lambda/\mathbb Z)=N_{K/\mathbb Q}(d(\Lambda/\mathcal O_K))d(\mathcal O_K/\mathbb Z)^{n^2}.$$

\noindent The following theorems provide lower bounds for the $\mathbb Z$-discriminant of $\Lambda$ which depend only on the signatures of $K$ and $\A$. Note that below, $\gamma=0.577215664901532860\dots$ is Euler's constant, and that $C_h$ is the function defined below in Equation (\ref{newfunc}).

Parts (1) - (3) of Theorem \ref{theorem:minimaldiscriminanttheorem} are used to prove Theorems \ref{theorem:discboundcase1} - \ref{theorem:discboundcase3}, respectively.

\begin{theorem}\label{theorem:discboundcase1}
Let $K$ be a number field of degree $d$ and signature $(r_1,r_2)$, $\A$ be a central division algebra over $K$ of degree $n\geq 2$ and signature $(0,r_1,r_2)$, and $\Lambda$ be a maximal order of $\A$. Let $y_0\in\{0.1,2\}$ and $y\leq y_0$ be a positive real number. Lastly, let $z(y) = \left[e^{r_1}e^{d(\gamma+\log 4\pi)}e^{-12\pi/5\sqrt{y}}e^{-I(y)}\right]^{n^2}$ and $(p_1,p_2)$ be the relevant pair of prime powers from Table \ref{table:minvaluecase1table}.
\begin{enumerate}
\item If $y_0=0.1$, then
\[
d(\Lambda/\mathbb Z)\geq\left\{
\begin{array}{lr}
4^{n(n-1)}(53.450)^{n^2}z(y), & n\geq 7\\
(p_1p_2)^{n(n-1)}(e^{C_h(p_1,0.1)+C_h(p_2,0.1)})^{n^2}z(y), & 2\leq n\leq 6
\end{array}
\right.
\]
\item If $y_0=2$, then
\[
d(\Lambda/\mathbb Z)\geq\left\{
\begin{array}{lr}
4^{n(n-1)}(8.134)^{n^2}z(y), & n\geq 7\\
(p_1p_2)^{n(n-1)}(e^{C_h(p_1,2)+C_h(p_2,2)})^{n^2}z(y), & 2\leq n\leq 6
\end{array}
\right.
\]
\end{enumerate}
\end{theorem}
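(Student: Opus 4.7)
The plan is to decompose the $\Z$-discriminant via the defining formula
\[d(\Lambda/\Z)=N_{K/\Q}\bigl(d(\Lambda/\OO_K)\bigr)\,d(\OO_K/\Z)^{n^2}\]
and to bound the two factors with two different ingredients: the algebra-side factor via Theorem~\ref{theorem:minimaldiscriminanttheorem}(1), and the field-side factor via a Poitou-type refinement of Odlyzko's bound that retains, rather than discards, the contribution of a small number of primes. The central intuition is that small primes of $\OO_K$ hurt the Odlyzko estimate but simultaneously help the algebra-side estimate, so the right way to proceed is to isolate the two smallest primes of $\OO_K$ and pay for them only once.

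Since the signature of $\A$ is $(0,r_1,r_2)$, no real place of $K$ ramifies, so Theorem~\ref{theorem:minimaldiscriminanttheorem}(1) applies. Let $P_1\leq P_2$ denote the two prime ideals of $\OO_K$ of smallest norm and set $p_i=N_{K/\Q}(P_i)$; then
\[N_{K/\Q}\bigl(d(\Lambda/\OO_K)\bigr)\geq (p_1p_2)^{n(n-1)}.\]
The remaining task is therefore to bound $d(\OO_K/\Z)^{n^2}$ from below, knowing that $\OO_K$ contains prime ideals of norms $p_1$ and $p_2$.

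For this I would use the Weil--Stark--Poitou explicit formula applied to a test function parametrised by a real number $y\in(0,y_0]$, exactly in the style of the original derivation of the Odlyzko bounds. After exponentiating, this gives an inequality of the schematic form
\[d(\OO_K/\Z)\geq e^{r_1}\,e^{d(\gamma+\log 4\pi)}\,e^{-12\pi/(5\sqrt y)}\,e^{-I(y)}\prod_{P}e^{C_h(N(P),y)},\]
where the product runs over all non-archimedean primes of $\OO_K$ and $C_h(p,y)$ records the cost of a single prime of norm $p$. Because $p\mapsto C_h(p,y)$ is monotone in $p$ with limit $0$ at infinity (the larger the norm, the cheaper the prime), truncating the product to the two terms $C_h(p_1,y)+C_h(p_2,y)$ yields a bound in our favour. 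Raising to the $n^2$-th power and multiplying with the algebra-side estimate produces
\[d(\Lambda/\Z)\geq (p_1p_2)^{n(n-1)}\bigl(e^{C_h(p_1,y)+C_h(p_2,y)}\bigr)^{n^2}z(y).\]

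It remains to minimise this expression over the admissible pairs $(p_1,p_2)$ of smallest prime norms. For $n\geq 7$ the exponent $n(n-1)\geq 42$ is large enough that the monomial $(p_1p_2)^{n(n-1)}$ swamps any loss hidden in $C_h$, so the minimum is attained at the smallest possible pair $p_1=p_2=2$; this yields the leading factor $4^{n(n-1)}$, and the numerical constants $53.450$ and $8.134$ arise by evaluating $e^{2C_h(2,y_0)}$ at $y_0=0.1$ and $y_0=2$ respectively. For $2\leq n\leq 6$ the trade-off between the gain $(p_1p_2)^{n(n-1)}$ and the loss $e^{n^2(C_h(p_1,y)+C_h(p_2,y))}$ is delicate, and the minimising pair depends on $n$; the correct pair is catalogued in Table~\ref{table:minvaluecase1table}. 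The main obstacle is precisely this last optimisation: one must verify the monotonicity of $C_h(\cdot,y)$ that justifies truncating the product to the two smallest primes, and then carry out a finite but careful comparison of candidate pairs of small prime powers in the low-$n$ regime to identify the minimiser; the sign of the derivative of the bound in $(p_1,p_2)$ flips in this range, which is why a universal closed-form expression is available only for $n\geq 7$.
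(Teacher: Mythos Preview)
Your approach is essentially the same as the paper's: decompose $d(\Lambda/\Z)$, bound the algebra factor via Theorem~\ref{theorem:minimaldiscriminanttheorem}(1), bound the field factor via the Poitou--Odlyzko inequality with prime corrections (the paper cites Brueggeman--Doud for this), and then minimise the combined expression $(p_1p_2)^{1-1/n}e^{C_h(p_1,y_0)+C_h(p_2,y_0)}$ over prime powers, which is the content of Proposition~\ref{proposition:theorem1proposition}.

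One technical point you gloss over: your displayed inequality carries $C_h(p_i,y)$, whereas the statement of the theorem has $C_h(p_i,y_0)$. The paper inserts the extra step $C_h(x,y)\geq C_h(x,y_0)$ for $y\leq y_0$ (Lemma~\ref{lemma:basicproperties}(ii), using that $h$ is decreasing) precisely so that the prime-power optimisation can be carried out once at the fixed value $y_0\in\{0.1,2\}$ while $z(y)$ retains the free parameter $y$. Without this decoupling you cannot legitimately evaluate $e^{2C_h(2,y_0)}$ to obtain the constants $53.450$ and $8.134$ while still leaving $y$ free in $z(y)$. Also, your justification for truncating the Euler product to two primes should appeal to the nonnegativity $C_h\geq 0$ (which follows from $h\geq 0$), not to monotonicity of $C_h$ in the norm; monotonicity is neither needed nor established.
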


\begin{theorem}\label{theorem:discboundcase2}
Let $K$ be a number field of degree $d$ and signature $(1,r_2)$, $\A$ be a central division algebra over $K$ of degree $n=2m$ (with $m$ odd), and $\Lambda$ be a maximal order of $\A$. Let $y_0\in\{0.1,2\}$ and $y\leq y_0$ be a positive real number. Lastly, let $z(y) = \left[e^{r_1}e^{d(\gamma+\log 4\pi)}e^{-12\pi/5\sqrt{y}}e^{-I(y)}\right]^{n^2}$ and $(p_1,p_2)$ be the relevant pair of prime powers from Table \ref{table:minvaluecase2table}.
\begin{enumerate}
\item If $y_0=0.1$, then
\[
d(\Lambda/\mathbb Z)\geq\left\{
\begin{array}{lr}
2^{n(n-1)}41^{m(m-1)}(9.572)^{n^2}z(y), & n\geq 30\\
p_1^{n(n-1)}p_2^{m(m-1)}(e^{C_h(p_1,0.1)+C_h(p_2,0.1)})^{n^2}z(y), & 2\leq n\leq 26
\end{array}
\right.
\]
\item If $y_0=2$, then
\[
d(\Lambda/\mathbb Z)\geq\left\{
\begin{array}{lr}
2^{n(n-1)}41^{m(m-1)}(2.852)^{n^2}z(y), & n\geq 30\\
p_1^{n(n-1)}p_2^{m(m-1)}(e^{C_h(p_1,2)+C_h(p_2,2)})^{n^2}z(y), & 2\leq n\leq 26
\end{array}
\right.
\]
\end{enumerate}
\end{theorem}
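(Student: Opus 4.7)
The plan is to combine the factorization identity
\[
d(\Lambda/\Z) = N_{K/\Q}\!\bigl(d(\Lambda/\OO_K)\bigr) \cdot d(\OO_K/\Z)^{n^2}
\]
with two complementary lower bounds applied to the two factors separately, and then to optimize over the pair of small prime ideals of $\OO_K$ whose existence forces a compromise between them. The ideal-theoretic factor will be bounded below via Theorem \ref{theorem:minimaldiscriminanttheorem}, while the field discriminant will be bounded below via the Poitou refinement of Odlyzko's explicit formula.

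For the algebraic factor I would apply Theorem \ref{theorem:minimaldiscriminanttheorem}(2) directly: the hypotheses---$K$ of signature $(1, r_2)$, so with a unique real place, and $n = 2m$ with $m$ odd---are exactly those needed. The conclusion gives
\[
N_{K/\Q}\bigl(d(\Lambda/\OO_K)\bigr) \,\geq\, p_1^{\,n(n-1)} \, p_2^{\,m(m-1)},
\]
where $p_1 \leq p_2$ are the norms of the two smallest prime ideals $P_1, P_2$ of $\OO_K$. These are (rational) prime powers but are otherwise a priori unknown.

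For the field-theoretic factor I would invoke the Weil--Poitou explicit formula applied to $\zeta_K$ with a non-negative test function $h$ parametrized by $y > 0$. After the standard rearrangement used by Odlyzko and Poitou, this gives an inequality of the shape
\[
\log d(\OO_K/\Z) \,\geq\, r_1 + d(\gamma + \log 4\pi) - \tfrac{12\pi}{5\sqrt{y}} - I(y) + \sum_{\mathfrak p,\,k \geq 1} C_h\!\bigl(N(\mathfrak p)^{k}, y\bigr),
\]
where each term in the prime sum is sign-definite. Retaining only the contributions of $P_1$ and $P_2$ (first powers suffice for a lower bound), and using monotonicity in $y$ to replace $C_h(\cdot, y)$ by $C_h(\cdot, y_0)$ for $y \leq y_0$ in the prime terms while keeping the archimedean terms as functions of $y$, yields the archimedean factor $z(y)$ together with the prime factor $e^{\,n^2(C_h(p_1,y_0) + C_h(p_2,y_0))}$ after raising to the $n^2$-th power. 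Combined with the algebraic estimate this produces a lower bound for $d(\Lambda/\Z)$ depending only on the still unknown pair $(p_1, p_2)$.

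The remaining and main task is to minimize this combined expression over all admissible ordered pairs of rational prime powers $(p_1, p_2)$ with $p_1 \leq p_2$; here the two bounds are in tension, since shrinking the prime norms makes $p_1^{n(n-1)} p_2^{m(m-1)}$ smaller but makes $C_h(p_i, y_0)$ larger. For $2 \leq n \leq 26$ (that is, $m \in \{1,3,5,7,9,11,13\}$) the optimization is a finite numerical search whose outcome is recorded in Table \ref{table:minvaluecase2table}. The main analytical obstacle is the asymptotic regime $n \geq 30$: there one must show that the exponent $n(n-1)$ dominates $m(m-1)$ sufficiently that the pair $(p_1, p_2) = (2, 41)$ is uniformly optimal in $n$; the exclusion of $n = 28$ (impossible because $m$ must be odd) is precisely what permits a clean split between the small-$n$ table and the closed-form large-$n$ bound.
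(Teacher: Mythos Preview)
Your proposal is correct and follows essentially the same route as the paper. The paper proves Theorem~\ref{theorem:discboundcase1} explicitly and declares Theorems~\ref{theorem:discboundcase2} and~\ref{theorem:discboundcase3} ``similar''; the ingredients you list---the factorization $d(\Lambda/\Z)=N_{K/\Q}(d(\Lambda/\OO_K))\,d(\OO_K/\Z)^{n^2}$, Theorem~\ref{theorem:minimaldiscriminanttheorem}(2), the Poitou--Odlyzko inequality~\eqref{equation:odlyzkomodified}, the monotonicity step of Lemma~\ref{lemma:basicproperties}, and then minimization of $x_1^{1-1/n}x_2^{1/4-1/(2n)}e^{C_h(x_1,y_0)+C_h(x_2,y_0)}$ over prime powers---are exactly those used. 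The only point worth flagging is that your sketch stops short of the actual mechanism the paper uses for the large-$n$ minimization: it is not a pure dominance-of-exponents argument but rather a reduction (via an auxiliary function $g(x_1,x_2)=x_1x_2^{1/4}e^{C_h(x_1,y)+C_h(x_2,y)}$) to a bounded search region, a finite computer verification up to $n=7000$, and then a uniform estimate for $n>7000$ comparing $g(2,41)$ against $(\tfrac{2}{p_1})^{1/n}(\tfrac{41}{p_2})^{1/(2n)}g(p_1,p_2)$.
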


\begin{theorem}\label{theorem:discboundcase3}
Let $K$ be a number field of degree $d$ and signature $(r_1,r_2)$ with $r_1\geq 2$. Let $\A$ be a central division algebra over $K$ of degree $n=2m$ (with $m$ odd), and $\Lambda$ be a maximal order of $\A$. Let $y_0\in\{0.1,2\}$ and $y\leq y_0$ be a positive real number. Lastly, let $z(y) = \left[e^{r_1}e^{d(\gamma+\log 4\pi)}e^{-12\pi/5\sqrt{y}}e^{-I(y)}\right]^{n^2}$ and $(p_1,p_2)$ be the relevant pair of prime powers from Table \ref{table:minvaluecase3table}.
\begin{enumerate}
\item If $y_0=0.1$, then
\[
d(\Lambda/\mathbb Z)\geq\left\{
\begin{array}{lr}
37^{2m(m-1)}(1.803)^{n^2}z(y), & n\geq 118\\
(p_1p_2)^{m(m-1)}(e^{C_h(p_1,0.1)+C_h(p_2,0.1)})^{n^2}z(y), & 6\leq n\leq 114
\end{array}
\right.
\]
\item If $y_0=2$, then
\[
d(\Lambda/\mathbb Z)\geq\left\{
\begin{array}{lr}
9^{2m(m-1)}(1.189)^{n^2}z(y), & n\geq 14\\
(11)^{2m(m-1)}(1.091)^{n^2}z(y), & n=6,10
\end{array}
\right.
\]
\end{enumerate}
\end{theorem}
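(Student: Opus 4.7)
The starting point is the factorization
$d(\Lambda/\mathbb{Z}) = N_{K/\mathbb{Q}}(d(\Lambda/\OO_K)) \cdot d(\OO_K/\mathbb{Z})^{n^2}$
stated at the opening of Section \ref{bounding_sec}. I would bound the two factors separately as functions of the pair $(p_1,p_2)$ of smallest prime-power norms of $\OO_K$, and then take the minimum of the product over all admissible pairs. Since $r_1 \geq 2$ and $n = 2m$ with $m$ odd, the hypotheses of Theorem \ref{theorem:minimaldiscriminanttheorem}(3) are satisfied and give $d(\Lambda/\OO_K) \geq (P_1 P_2)^{m(m-1)}$; applying $N_{K/\mathbb{Q}}$ immediately yields the relative factor $N_{K/\mathbb{Q}}(d(\Lambda/\OO_K)) \geq (p_1 p_2)^{m(m-1)}$.

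For the field-discriminant factor $d(\OO_K/\mathbb{Z})^{n^2}$, I would apply Poitou's refinement of the Odlyzko analytic method. Starting from the Weil explicit formula for $\zeta_K$ with the standard one-parameter family of test functions indexed by $y$, one drops the nonnegative contribution of the non-trivial zeros and, in the prime sum, keeps only the two prime ideals of smallest norm (whose total contribution is $C_h(p_1,y) + C_h(p_2,y)$, with $C_h$ defined in (\ref{newfunc})). The resulting inequality
\[
\log d(\OO_K/\mathbb{Z}) \geq r_1 + d(\gamma + \log 4\pi) - \tfrac{12\pi}{5\sqrt{y}} - I(y) + C_h(p_1,y) + C_h(p_2,y),
\]
raised to the $n^2$-power, produces precisely the factor $z(y)\cdot(e^{C_h(p_1,y)+C_h(p_2,y)})^{n^2}$ appearing in the theorem. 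Multiplying the two lower bounds together gives
\[
d(\Lambda/\mathbb{Z}) \geq F(p_1,p_2) := (p_1 p_2)^{m(m-1)}\bigl(e^{C_h(p_1,y)+C_h(p_2,y)}\bigr)^{n^2} z(y).
\]

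Since $(p_1,p_2)$ is not known a priori, the universal lower bound is $\min_{(p_1,p_2)} F(p_1,p_2)$ with the pair running over admissible prime powers. Carrying out this discrete minimization produces the two-regime structure in the statement. Because the two exponents are $m(m-1)$ and $n^2 = 4m^2$, there is a transition in $n$: in the small-$n$ ranges ($6 \leq n \leq 114$ in part (1), and $n=6,10$ in part (2)), the algebraic factor $(p_1p_2)^{m(m-1)}$ is comparable to the exponential factor and the minimizer must be found by a finite search over small prime powers, which is exactly what Table \ref{table:minvaluecase3table} records; in the large-$n$ regime the exponential factor dominates, and a one-variable optimization of $p \mapsto m(m-1)\log p + n^2 C_h(p,y)$ singles out a balanced prime power---$p=37$ when $y_0=0.1$ and $p=9$ or $11$ when $y_0=2$---attained at the diagonal $p_1 = p_2$, giving the clean formulas in the first line of each part.

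The main obstacle is the quantitative analysis of $C_h(p,y)$: one needs growth and monotonicity estimates sharp enough to (i) locate the minimizing prime power in the large-$n$ regime, (ii) pin down the exact thresholds $n=118$ (for $y_0=0.1$) and $n=14$ (for $y_0=2$) at which the regime change occurs, and (iii) confirm by finite check that no smaller prime-power pair is missed from the small-$n$ tabulation. Once $C_h$ is sufficiently well understood, the rest of the argument is elementary numerical bookkeeping on the finitely many small-$n$ cases.
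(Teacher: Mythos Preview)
Your proposal is correct and follows essentially the same route as the paper: factor $d(\Lambda/\mathbb Z)$, apply Theorem~\ref{theorem:minimaldiscriminanttheorem}(3) to the relative factor, apply the Odlyzko--Poitou inequality (in the form~\eqref{equation:odlyzkomodified}) to the field-discriminant factor, and then minimize the resulting product over pairs of prime powers via the third proposition of Section~\ref{subsection:minimizing}. One small point you glossed over: in your displayed bound the $C_h$ terms carry the free parameter $y$, but the theorem evaluates them at the fixed $y_0$; the paper handles this via Lemma~\ref{lemma:basicproperties}(ii), using the monotonicity $C_h(x,y)\geq C_h(x,y_0)$ for $y\leq y_0$, so that the minimization over $(p_1,p_2)$ is decoupled from $y$ and can be carried out once for each $y_0\in\{0.1,2\}$.
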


\begin{remark}
In stating Theorem \ref{theorem:discboundcase3} we excluded the case $n=2$. The reason for this was that in this situation, the hypotheses of the theorem allow for the existence of a quaternion division algebra ramified at precisely two real places of $K$ and which is unramified at all finite primes of $K$. Given a maximal order $\Lambda$ of such an algebra, we will have $d(\Lambda/\mathbb Z)=d(\mathcal O_k/\mathbb Z)$, hence our desired bound is simply the Odlyzko bound.
\end{remark}

\begin{remark} As was the case with Theorem \ref{theorem:minimaldiscriminanttheorem} (and pointed out in Remark \ref{remark:exhaustive}), Theorems \ref{theorem:discboundcase1}, \ref{theorem:discboundcase2} and \ref{theorem:discboundcase3} exhaust all possible central division algebras.
\end{remark}

In order to obtain a lower bound for $d(\Lambda/\mathbb Z)$, it of course suffices to obtain a lower bound for $$|N^K_{\mathbb Q}(d(\Lambda/\mathcal O_K))|^{1/n^2}|d(\mathcal O_K/\mathbb Z)|.$$

We have already seen, in Theorem \ref{theorem:minimaldiscriminanttheorem}, how to obtain lower bounds for $|N_{\mathbb Q}^K(d(\Lambda/\mathcal O_k))|$. We now focus on bounding $|d(\mathcal O_K/\mathbb Z)|$ from below. To do so we will employ the Odlyzko bounds \cite{Odlyzko-bounds}, as well as a refinement of these bounds due to Poitou \cite{Poitou} which takes into account the existence of primes of small norm. The precise formulation of these bounds which we will use is due to Brueggeman and Doud \cite[Theorem 2.4]{doud}.

Let $y>0$ be a real number, $\gamma$ be Euler's constant, and $I(y)$ be as in \cite[Theorem 2.4]{doud}. Let \[
f(x):=(3x^{-3}(\sin x - x\cos x))^2,
\]
and 
\[
C_f(x,y):=4\sum_{j=1}^\infty \frac{\log x}{1+x^j}f(j\sqrt{y}\log x).
\]

Theorem 2.4 of \cite{doud} shows that for any prime ideals $P_1,P_2$ of $k$ and all $y>0$

\begin{equation}\label{equation:odlyzko}
|d(\mathcal O_K/\mathbb Z)| \geq e^{r_1}e^{d(\gamma+\log 4\pi)}e^{-12\pi/5\sqrt{y}}e^{-I(y)}e^{C_f(N^K_{\mathbb Q}(P_1),y)}e^{C_f(N^K_{\mathbb Q}(P_2),y)}.
\end{equation}

We further define functions
$$
h(x) = \left\{
        \begin{array}{ll}
            f(x), & \quad x \leq 4 \\
            0, & \quad x > 4
        \end{array}
    \right.
$$
and
\begin{equation}\label{newfunc}
C_h(x,y):=4\sum_{j=1}^\infty \frac{\log x}{1+x^j}h(j\sqrt{y}\log x).
\end{equation}

The next lemma follows immediately from the fact that for all $x\geq 0$ we have $f(x)\geq h(x)\geq 0$ and the fact that $h(x)$ is decreasing.

\begin{lemma}\label{lemma:basicproperties}
For all real numbers $x,y,y_0>0$ with $y\leq y_0$ the following properties hold:
\begin{enumerate}[(i)]
\item We have $C_f(x,y)\geq C_h(x,y)$.
\item We have $C_h(x,y)\geq C_h(x,y_0)$.
\end{enumerate}
\end{lemma}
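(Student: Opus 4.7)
The plan is to deduce both parts from termwise comparisons of the defining series, relying only on the two pointwise facts noted just before the lemma, namely that $f(t) \geq h(t) \geq 0$ and that $h$ is (weakly) decreasing on $[0,\infty)$. The first fact is immediate from the definition: if $t \leq 4$ then $h(t) = f(t) \geq 0$ since $f$ is a square, and if $t > 4$ then $h(t) = 0 \leq f(t)$.

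For the monotonicity of $h$, it suffices to show that $f$ is decreasing on $[0,4]$, since $h \equiv 0$ on $(4,\infty)$. Writing $g(t) = 3t^{-3}(\sin t - t\cos t)$, so that $f = g^2$, one has
\[
g'(t) = \frac{3}{t^4}\bigl[(t^2-3)\sin t + 3t\cos t\bigr].
\]
I would verify that $g$ is nonzero on $(0,4]$ (the smallest positive root of $\tan t = t$ is $\approx 4.493$) and that the bracket in $g'$ is nonzero on $(0,4]$; indeed, its Taylor expansion about $t=0$ begins with $-t^5/15$, and a handful of sign checks (say at $t \in \{1, \sqrt{3}, 2, \pi, 4\}$) confirm that it remains negative throughout $(0,4]$. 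Hence $f' = 2gg'$ has constant negative sign on $(0,4]$, and $f$ decreases on $[0,4]$ as required.

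With both pointwise facts in hand, the rest is routine termwise comparison. In the range $x \geq 2$ relevant to the applications to prime norms, $\log x > 0$, so each weight $\tfrac{\log x}{1+x^j}$ is positive. Part (i) then follows by applying $f(j\sqrt{y}\log x) \geq h(j\sqrt{y}\log x)$ to each term of the series, while part (ii) follows from the implication $y \leq y_0 \Rightarrow j\sqrt{y}\log x \leq j\sqrt{y_0}\log x$ together with the monotonicity of $h$. I expect the only nontrivial step—and hence the main obstacle, such as it is—to be the monotonicity verification on $[0,4]$, which is a short elementary calculus argument.
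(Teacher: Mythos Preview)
Your proposal is correct and matches the paper's approach exactly: the paper states only that the lemma ``follows immediately from the fact that for all $x\geq 0$ we have $f(x)\geq h(x)\geq 0$ and the fact that $h(x)$ is decreasing,'' and you simply supply the elementary calculus verification of the latter on $[0,4]$ that the paper omits. Your observation that the termwise comparison requires $\log x > 0$ is apt; both the paper's one-line argument and yours implicitly need $x>1$, which is harmless since every application in the paper takes $x$ to be the norm of a prime ideal.
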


It follows that for all $y>0$

\begin{equation}\label{equation:odlyzkomodified}
|d(\mathcal O_K/\mathbb Z)| \geq e^{r_1}e^{d(\gamma+\log 4\pi)}e^{-12\pi/5\sqrt{y}}e^{-I(y)}e^{C_h(N^K_{\mathbb Q}(P_1),y)}e^{C_h(N^K_{\mathbb Q}(P_2),y)}.
\end{equation}

Since we are viewing the signatures of $\A$ and $K$ as being fixed, and since the term $ e^{r_1}e^{d(\gamma+\log 4\pi)}e^{-12\pi/5\sqrt{y}}e^{-I(y)}$ is determined by the signature of $K$, it suffices (by Theorem \ref{theorem:minimaldiscriminanttheorem}) to determine the rational prime powers $p_1,p_2>1$ for which each of the following functions are minimized:

\begin{enumerate}
\item $(p_1p_2)^{(n-1)/n}e^{C_h(p_1,y)}e^{C_h(p_2,y)}=(p_1p_2)^{1-\frac{1}{n}}e^{C_h(p_1,y)}e^{C_h(p_2,y)}$,

\item $p_1^{(n-1)/n}p_2^{m(m-1)/n^2}e^{C_h(p_1,y)}e^{C_h(p_2,y)}=p_1^{1-\frac{1}{n}}p_2^{\frac{1}{4}-\frac{1}{2n}}e^{C_h(p_1,y)}e^{C_h(p_2,y)}$,
\item $(p_1p_2)^{m(m-1)/n^2}e^{C_h(p_1,y)}e^{C_h(p_2,y)}=(p_1p_2)^{\frac{1}{4}-\frac{1}{2n}}e^{C_h(p_1,y)}e^{C_h(p_2,y)}$.
\end{enumerate}

We will determine the minima of these three functions with respect to the parameters $y=0.1$ and $y=2$.

In order to obtain a good bound for $\delta(\Lambda)$, we will take advantage of the fact that both $d(\Lambda/\mathcal O_k)$ and $d(\mathcal O_k/\mathbb Z)$ are affected by the existence of primes of small norm. To do so we will need a few technical results, which are the subject of Section \ref{subsection:minimizing}.

\subsection{Three technical propositions}\label{subsection:minimizing}

\begin{proposition}\label{proposition:theorem1proposition}
Let $n\geq 2$ and define $f_n(x_1,x_2)=(x_1x_2)^{1-\frac{1}{n}}e^{C_h(x_1,y)+C_h(x_2,y)}$. 

\begin{enumerate}
\item If $y=0.1$ and $n\geq 7$ then for all prime powers $p_1,p_2>1$ we have $f_n(p_1,p_2)\geq f_n(2,2)$. For $2\leq n\leq 6$ the prime powers for which $f_n(x_1,x_2)$ is minimized are given in Table \ref{table:minvaluecase1table}.
\item If $y=2$ and $n\geq 7$ then for all prime powers $p_1,p_2>1$ we have $f_n(p_1,p_2)\geq f_n(2,2)$. For $2\leq n\leq 6$ the prime powers for which $f_n(x_1,x_2)$ is minimized are given in Table \ref{table:minvaluecase1table}.
\end{enumerate}
\end{proposition}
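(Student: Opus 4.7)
The plan is to exploit the separable structure of $f_n$. Setting $g_n(x) := x^{1-1/n}\, e^{C_h(x,y)}$, we have $f_n(x_1,x_2) = g_n(x_1)\, g_n(x_2)$, and since both factors are positive, the minimum of $f_n$ over pairs of prime powers is attained at $(p^\ast, p^\ast)$ where $p^\ast$ minimizes the one-variable function $g_n$ over prime powers $>1$. This reduces the two-variable problem to a one-variable minimization.

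To make this a finite search, I would exploit the compact support of $h$. Since $h(t) = 0$ for $t > 4$, the $j$-th term in the definition of $C_h(p,y)$ vanishes unless $j\sqrt{y}\log p \leq 4$, so $C_h(p,y) = 0$ for $p > e^{4/\sqrt{y}}$; this threshold is roughly $3.1\times 10^5$ when $y = 0.1$ and $\approx 17$ when $y = 2$. Beyond the threshold $g_n(p) = p^{1-1/n}$ is strictly increasing in $p$, so the search is confined to prime powers below it, and the range can be pruned further by discarding any $p$ with $p^{1-1/n} > g_n(2)$, since $C_h \geq 0$ makes such $p$ noncompetitive. For the $n \geq 7$ portion of the claim I would invoke monotonicity in $n$: for any prime power $p > 2$,
\[
\frac{g_n(p)}{g_n(2)} = \left(\frac{p}{2}\right)^{1 - 1/n} e^{C_h(p,y) - C_h(2,y)}
\]
is increasing in $n$, because $1 - 1/n$ is increasing, $p/2 > 1$, and the exponential factor is independent of $n$. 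Hence once the inequality $g_n(p) \geq g_n(2)$ is verified at the critical value $n = 7$ for every prime power $p$ in the finite pruned range, it extends automatically to all $n \geq 7$.

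What remains is a finite numerical verification for $n \in \{2, 3, 4, 5, 6, 7\}$ and $y \in \{0.1, 2\}$: enumerate the surviving prime powers, evaluate $C_h(p, y)$ as a finite sum (the summation in $j$ terminates at $j = \lfloor 4/(\sqrt{y}\log p)\rfloor$), compute $g_n(p)$, and read off the minimizer for comparison with Table \ref{table:minvaluecase1table}. I expect the main obstacle to be purely computational, particularly in the small-$n$ cases where $1-1/n$ is small enough that the exponential factor $e^{C_h(p,y)}$ can genuinely tilt the minimizer away from $p = 2$; these are the cases where one needs enough numerical precision to distinguish closely competing candidates and to confirm the exact entries claimed in the table.
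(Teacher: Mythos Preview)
Your proposal is correct and in fact cleaner than the paper's own argument. The key difference is that you exploit the separable structure $f_n(x_1,x_2)=g_n(x_1)g_n(x_2)$ to reduce to a one-variable minimization, whereas the paper works with the two-variable function throughout. More importantly, your monotonicity observation---that $g_n(p)/g_n(2)=(p/2)^{1-1/n}e^{C_h(p,y)-C_h(2,y)}$ is increasing in $n$ for $p>2$---lets you settle all $n\geq 7$ by a single finite check at $n=7$. The paper instead verifies the claim numerically for $7\leq n\leq 1000$ and then, for $n>1000$, introduces the auxiliary $g(x_1,x_2)=(x_1x_2)e^{C_h(x_1,y)+C_h(x_2,y)}$ and argues that $(4/(p_1p_2))^{1/n}$ is close enough to $1$ that $g(2,2)<0.9907\cdot g(p_1,p_2)$ suffices. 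Your bounding of the search space via the compact support of $h$ (so $C_h(p,y)=0$ for $p>e^{4/\sqrt{y}}$) is also more direct than the paper's route, which bounds $f_n(2,2)$ from above by $g(2,2)<214$ and then uses $f_n(x_1,x_2)\geq\sqrt{x_1x_2}$ to confine $p_1,p_2\leq 214^2/2$. Both arguments ultimately rest on a finite numerical verification, but yours requires checking only six values of $n$ and a single variable, while the paper's requires a two-variable sweep over a much larger range of $n$.
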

\begin{proof}
We will prove the proposition in the case that $y=0.1$. The case in which $y=2$ is completely analogous.

Fix an integer $n\geq 2$ and define an auxiliary function $g(x_1,x_2)=(x_1x_2)e^{C_h(x_1,0.1)+C_h(x_2,0.1)}$. Note that for all $x_1,x_2\geq 0$ we have $g(x_1,x_2)\geq f_n(x_1,x_2)$. An easy calculation shows that $214>g(2,2)$. As $g(2,2)\geq f_n(2,2)$, we conclude that $214\geq f_n(2,2)$. Observe that $f_n(x_1,x_2)\geq \sqrt{x_1x_2}$. It follows that if $p_1,p_2$ are prime powers and $f_n(p_1,p_2)<f_n(2,2)$, then $f_n(p_1,p_2)<214$ and so $2\leq p_1,p_2\leq \frac{214^2}{2}$.

By virtue of the previous paragraph we can check, for any fixed value of $n\geq 2$, to see which values of $(p_1,p_2)$ minimize the function $f_n(x_1,x_2)$ when restricted to prime powers. The assertion of the proposition for $2\leq n\leq 6$ therefore follows immediately. Similarly, an easy computation shows that $f_n(p_1,p_2)\geq f_n(2,2)$ for all prime powers $p_1,p_2$ when $7\leq n\leq 1000$. Suppose now that $n>1000$. Since $f_n(x_1,x_2)=g(x_1,x_2)/(x_1x_2)^{\frac{1}{n}}$, we have $f_n(2,2)> f_n(p_1,p_2)$ if and only if $g(2,2)>(\frac{4}{p_1p_2})^{\frac{1}{n}}g(p_1,p_2)$. As we are assuming that $n>1000$ it is clear that if $(p_1,p_2)\neq (2,2)$ then $(\frac{4}{p_1p_2})^{\frac{1}{n}}>0.990707126780213$. The proposition now follows from a computation which shows that $g(2,2)<0.990707126780213\cdot g(p_1,p_2)$ for all prime powers $p_1,p_2\leq \frac{214^2}{2}$.  \end{proof}

\begin{proposition}
Let $n=2m \geq 2$ with $m$ odd and define $f_n(x_1,x_2)=x_1^{1-\frac{1}{n}}x_2^{\frac{1}{4}-\frac{1}{2n}}e^{C_h(x_1,y)+C_h(x_2,y)}$. 

\begin{enumerate}
\item If $y=0.1$ and $n\geq 30$ then for all prime powers $p_1,p_2>1$ we have $f_n(p_1,p_2)\geq f_n(2,41)$. If $2\leq n\leq 26$, the prime powers for which $f_n(x_1,x_2)$ is minimized are given in Table \ref{table:minvaluecase2table}. 
\item If $y=2$ and $n\geq 14$ then for all prime powers $p_1,p_2>1$ we have $f_n(p_1,p_2)\geq f_n(2,9)$. If $n\in\{2,6,10\}$, the prime powers for which $f_n(x_1,x_2)$ is minimized are $\{(7,17),(3,11),(2,11)\}$. 
\end{enumerate}
\end{proposition}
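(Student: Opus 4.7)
The plan is to mirror the strategy used in the proof of Proposition \ref{proposition:theorem1proposition}, adapted to the asymmetric exponents appearing in $f_n$. First, introduce the auxiliary function
\[
g(x_1,x_2) = x_1 \, x_2^{1/4}\, e^{C_h(x_1,y)+C_h(x_2,y)},
\]
which pointwise dominates $f_n$ on the domain of prime powers: indeed for $x_i\geq 2$ and $n\geq 2$ we have $x_1^{1-1/n}\leq x_1$ and $x_2^{1/4-1/(2n)}\leq x_2^{1/4}$, so $f_n(x_1,x_2)\leq g(x_1,x_2)$. Evaluating $g$ (numerically) at the conjectured minimizers $(2,41)$ for $y=0.1$ and $(2,9)$ for $y=2$ yields explicit numerical upper bounds $M$ for $f_n$ at those points, independent of $n$.

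Second, use the trivial lower bound $f_n(p_1,p_2)\geq p_1^{1-1/n}\, p_2^{1/4-1/(2n)}\geq \sqrt{p_1}$ (from $C_h\geq 0$ and $n\geq 2$), together with an analogous bound on $p_2$ for $n\geq 4$, to show that any prime powers $(p_1,p_2)$ with $f_n(p_1,p_2)<M$ must lie in an explicit bounded rectangle of prime powers. This reduces the minimization to a finite search for every fixed $n$. For the small-$n$ regime ($2\leq n\leq 26$ when $y=0.1$, and $n\in\{2,6,10\}$ when $y=2$), one carries out this finite search directly and records the resulting minimizers as in Table \ref{table:minvaluecase2table} and as $\{(7,17),(3,11),(2,11)\}$ respectively.

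For the large-$n$ regime one proceeds as in the previous proposition. A direct finite computation handles a moderate range $n_0\leq n\leq N_0$ (where $n_0=30$ or $14$ depending on $y$) by brute force over the bounded rectangle of candidates. For $n>N_0$, write
\[
\frac{f_n(p_1,p_2)}{f_n(2,p_2^\star)} = \left(\frac{p_1}{2}\right)^{1-1/n}\!\left(\frac{p_2}{p_2^\star}\right)^{1/4-1/(2n)}\! e^{C_h(p_1,y)-C_h(2,y)}\, e^{C_h(p_2,y)-C_h(p_2^\star,y)},
\]
with $p_2^\star=41$ or $9$ as appropriate. As $n\to\infty$ the exponents $1-1/n$ and $1/4-1/(2n)$ tend uniformly on the bounded rectangle to $1$ and $1/4$, so the ratio tends to $g(p_1,p_2)/g(2,p_2^\star)$; a finite numerical check shows this limiting ratio strictly exceeds $1$ for every other prime-power pair in the rectangle. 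Choosing $N_0$ large enough absorbs the perturbation from the limiting exponents uniformly.

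The main obstacle will be the uniform control of that perturbation: one must choose $N_0$ so that the ratios $(p_1/2)^{1/n}$ and $(p_2/p_2^\star)^{1/(2n)}$ stay sufficiently close to $1$ across the entire finite search rectangle, mirroring the threshold $0.990707126780213$ that appeared in the proof of Proposition \ref{proposition:theorem1proposition}. The asymmetry in the two exponents of $f_n$ means the argument requires slightly different thresholds for $p_1$ and $p_2$, but the structure of the verification is identical, and the remaining steps reduce to mechanical numerical computations of $C_h$ at small prime powers.
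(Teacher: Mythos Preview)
Your proposal is correct and follows essentially the same route as the paper: the same auxiliary function $g(x_1,x_2)=x_1 x_2^{1/4}e^{C_h(x_1,y)+C_h(x_2,y)}$, the same reduction to a finite rectangle of prime powers, the same brute-force check for a moderate range of $n$, and the same limiting-ratio argument for large $n$. One wrinkle you should anticipate when carrying this out for $y=0.1$: the paper finds that the limiting ratio $g(p_1,p_2)/g(2,41)$ is so close to $1$ at $(p_1,p_2)=(2,37)$ and $(2,43)$ that the uniform perturbation bound (their threshold $0.998828\ldots$ with $N_0=7000$) does not dispose of them, and these two pairs must be checked against $f_n(2,41)$ directly for $n>N_0$.
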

\begin{proof}
We will prove the proposition in the case that $y=0.1$. The case in which $y=2$ is similar and is left to the reader.

Fix an integer $n\geq 30$ as in the statement of the proposition and define an auxiliary function $g(x_1,x_2)=x_1x_2^{\frac{1}{4}}e^{C_h(x_1,y)+C_h(x_2,y)}$. Then for all $x_1,x_2>0$ we see that $g(x_1,x_2)>f_n(x_1,x_2)$. An easy calculation shows that $49>g(2,41)>f_n(2,41)$. As $n\geq 30$ we see that $49>f_n(2,41)>x_1^{\frac{29}{30}}x_2^{\frac{7}{30}}$. It follows that if $p_1,p_2$ are prime powers for which $f_n(2,41)>f_n(p_1,p_2)$ then $49^{\frac{30}{7}}\geq p_1^{\frac{29}{7}}p_2$. In particular we must have $p_1\leq 47$ and $p_2\leq 992129$.  

By virtue of the previous paragraph we can check, for any fixed value of $n\geq 30$, to see which values of $(p_1,p_2)$ minimize the function $f_n(x_1,x_2)$ when restricted to prime powers. Similarly, an easy computation shows that $f_n(p_1,p_2)\geq f_n(2,41)$ for all prime powers $p_1,p_2$ when $30 \leq n=2m\leq 7000$.

We now assume that $n>7000$. Since $f_n(x_1,x_2)=g(x_1,x_2)/x_1^{\frac{1}{n}}x_2^{\frac{1}{2n}}$, we have $f_n(2,41)> f_n(p_1,p_2)$ if and only if $g(2,41)>(\frac{2}{p_1})^{\frac{1}{n}}(\frac{41}{p_2})^{\frac{1}{2n}}g(p_1,p_2)$. In this case we see that $(\frac{2}{p_1})^{\frac{1}{n}}(\frac{41}{p_2})^{\frac{1}{2n}}\geq (\frac{2}{47})^{\frac{1}{7000}}(\frac{41}{992129
})^{\frac{1}{14000}}=0.998828683870189$ for all prime powers $p_1,p_2$ in the ranges specified above. A computation shows that $g(2,41)<0.998828683870189\cdot g(p_1,p_2)$ for all prime powers $2\leq p_1\leq 47$ and $2\leq p_2\leq 992129$ with $(p_1,p_2)\neq (2,41),(2,37),(2,43)$. The case of the proposition in which $y=0.1$ and $n\geq 30$ now follows from demonstrating that for $n>7000$ and $(p_1,p_2)=(2,37),(2,43)$ we have $f_n(p_1,p_2)\geq f_n(2,41)$. The case in which $y=0.1$ and $2\leq n\leq 26$ can be handled similarly.\end{proof}

\begin{proposition}
Let $n=2m \geq 2$ with $m$ odd and define $f_n(x_1,x_2)=(x_1x_2)^{\frac{1}{4}-\frac{1}{2n}}e^{C_h(x_1,y)+C_h(x_2,y)}$. 

\begin{enumerate}
\item If $y=0.1$ and $n\geq 118$ then for all prime powers $p_1,p_2>1$ we have $f_n(p_1,p_2)\geq f_n(37,37)$. If $6\leq n\leq 114$, the prime powers for which $f_n(x_1,x_2)$ is minimized are given in Table \ref{table:minvaluecase3table}.
\item If $y=2$ and $n\geq 14$ then for all prime powers $p_1,p_2>1$ we have $f_n(p_1,p_2)\geq f_n(9,9)$. If $n=6,10$ then for all prime powers $p_1,p_2>1$ we have $f_n(p_1,p_2)\geq f_n(11,11)$.
\end{enumerate}
\end{proposition}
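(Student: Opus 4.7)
The strategy closely mirrors the two preceding proofs. I handle $y=0.1$ in detail; the case $y=2$ is completely analogous, with the minimizer $(37,37)$ replaced by $(9,9)$ for $n\geq 14$ and by $(11,11)$ for $n\in\{6,10\}$.

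First I introduce the auxiliary function $g(x_1,x_2)=(x_1x_2)^{1/4}e^{C_h(x_1,0.1)+C_h(x_2,0.1)}$, which dominates $f_n$ for every $n\geq 2$ since $\tfrac14-\tfrac{1}{2n}\leq \tfrac14$. Let $M:=g(37,37)$, a quantity easily estimated by direct numerical calculation. Because $C_h\geq 0$ and the exponent $\tfrac14-\tfrac{1}{2n}=\tfrac{n-2}{4n}$ is positive for $n\geq 3$, we have the trivial bound $f_n(p_1,p_2)\geq (p_1p_2)^{(n-2)/(4n)}$. Consequently, any prime power pair $(p_1,p_2)$ with $f_n(p_1,p_2)<f_n(37,37)\leq M$ must satisfy $p_1p_2\leq M^{4n/(n-2)}$, which for $n\geq 118$ is bounded by an explicit $n$-independent constant, confining the search to finitely many prime power pairs.

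Next, for the intermediate range $118\leq n\leq N_0$ (for some convenient threshold like $N_0=7000$, in line with the previous proposition) I verify the inequality $f_n(p_1,p_2)\geq f_n(37,37)$ by a direct finite computation over the admissible pairs. For $n>N_0$, I rewrite the desired inequality using the identity $f_n(x_1,x_2)=g(x_1,x_2)/(x_1x_2)^{1/(2n)}$ as $g(37,37)\leq (37^2/(p_1p_2))^{1/(2n)}\,g(p_1,p_2)$. Over the finite search region, the factor $(37^2/(p_1p_2))^{1/(2n)}$ is bounded below by an explicit uniform constant $\kappa<1$ for all $(p_1,p_2)\neq (37,37)$ as soon as $n>N_0$. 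A finite computation then confirms $g(37,37)\leq \kappa\cdot g(p_1,p_2)$ for every pair in the search region except a short list of close competitors, which are treated individually by comparing $f_n(37,37)$ and $f_n(p_1,p_2)$ directly for all $n>N_0$. The complementary range $6\leq n\leq 114$ is handled by the same kind of finite search, but the minimizer now depends on $n$ as recorded in Table \ref{table:minvaluecase3table}; here no asymptotic step is needed at all.

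The main obstacle, as in the two preceding propositions, is pinning down and handling the close-competitor prime power pairs. Because the exponent gap between $f_n$ and its dominating function $g$ shrinks like $O(1/n)$, the margin available in the asymptotic step deteriorates as $n$ grows, so $N_0$ must be chosen large enough that the uniform asymptotic factor $\kappa$ still beats the residual numerical gap for every near-optimal competitor (most likely pairs with entries in $\{31,32,37,41,43\}$ in the $y=0.1$ case). Everything else is routine finite numerical verification.
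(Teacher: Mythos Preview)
Your proposal is correct and follows essentially the same strategy as the paper's proof: the same auxiliary function $g(x_1,x_2)=(x_1x_2)^{1/4}e^{C_h(x_1,y)+C_h(x_2,y)}$, the same a priori bound confining the search to finitely many prime power pairs, a direct check over an intermediate range of $n$, and the same asymptotic argument via $f_n=g/(x_1x_2)^{1/(2n)}$ for large $n$. The only differences are numerical: the paper takes the intermediate threshold to be $N_0=20000$ (not $7000$), and with that choice the uniform constant $\kappa=0.999727\ldots$ turns out to suffice for every pair in the search region, so no close competitors need to be handled separately in this case.
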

\begin{proof}
We will prove the proposition in the case that $y=0.1$. The case in which $y=2$ is similar and is left to the reader.

Fix an integer $n\geq 114$ as in the statement of the proposition and define an auxiliary function $g(x_1,x_2)=(x_1x_2)^{\frac{1}{4}}e^{C_h(x_1,y)+C_h(x_2,y)}$. Then for all $x_1,x_2>0$ we see that $g(x_1,x_2)>f_n(x_1,x_2)$. An easy calculation shows that $11>g(37,37)>f_n(37,37)$. As $f_n(x_1,x_2)>(x_1x_2)^{\frac{14}{57}}$ for $n$ in this range, we see that if $p_1,p_2$ are prime powers for which $f_n(p_1,p_2)<f_n(37,37)$ then $(p_1p_2)<11^{\frac{57}{14}}$.

By virtue of the previous paragraph we can check, for any fixed value of $n\geq 114$, to see which values of $(p_1,p_2)$ minimize the function $f_n(x_1,x_2)$ when restricted to prime powers. Similarly, an easy computation shows that $f_n(p_1,p_2)\geq f_n(37,37)$ for all prime powers $p_1,p_2$ when $116 \leq n=2m\leq 20000$. Suppose now that $n>20000$. Since $f_n(x_1,x_2)=g(x_1,x_2)/(x_1x_2)^{\frac{1}{2n}}$, we have $f_n(37,37)> f_n(p_1,p_2)$ if and only if $g(37,37)>(\frac{37}{\sqrt{p_1p_2}})^{\frac{1}{n}}g(p_1,p_2)$. 

Note that $8681$ is the largest prime power less than $\lfloor \frac{11^{\frac{57}{14}}}{2}\rfloor$. As we are assuming that $n>20000$ it is clear that $(\frac{37}{\sqrt{p_1p_2}})^{\frac{1}{n}}\geq (\frac{37}{8681})^{\frac{1}{20000}}=0.999727138528677$ for all prime powers $2\leq p_1,p_2\leq \lfloor\frac{11^{\frac{57}{14}}}{2}\rfloor$. The proof of the $y=0.1, n\geq 114$ case of the proposition now follows from a computation which shows that $g(37,37)<0.999727138528677\cdot g(p_1,p_2)$ for all prime powers $p_1,p_2$ in the aforementioned range. The proof when $y=0.1$ and $6\leq n \leq 110$ is virtually identical.\end{proof}

\newpage

\begin{table}[!h]
\begin{center}
\begin{tabular}[b]{|c|c|c||c|c|c|}
\hline
$y$ & $n$ & $(p_1,p_2)$ & $y$ & $n$ & $(p_1,p_2)$ \\
\hline
$0.1$ & $2$ & $(13,13)$ & $2$ & $2$ & $(7,7)$\\
\hline
$0.1$ & $3$ &  $(7,7)$ & $2$ & $3$ & $(4,4)$\\
\hline
$0.1$ & $4$ & $(4,4)$ & $2$ & $4$ & $(3,3)$ \\
\hline
$0.1$ & $5$ & $(3,3)$ & $2$ & $5$ & $(3,3)$ \\
\hline
$0.1$ & $6$ & $(3,3)$ & $2$ & $6$ & $(3,3)$ \\
\hline

\end{tabular}

\end{center}

\caption{Prime powers $(p_1,p_2)$ for which $(x_1x_2)^{1-\frac{1}{n}}e^{C_h(x_1,y)+C_h(x_2,y)}$ is minimized for $y\in\{0.1,2\}$}
\label{table:minvaluecase1table}
\end{table}

\begin{table}[htbp]
\begin{center}
\begin{tabular}[b]{|c|c|}
\hline
$n$ & $(p_1,p_2)$ \\
\hline
$2$ & $(13,*)$\footnotemark  \\
\hline
$6$ & $(3,64)$ \\
\hline
$10$ &  $(2,53)$ \\
\hline
$14$ & $(2,47)$ \\
\hline
$18$ & $(2,43)$\\
\hline
$22$ & $(2,43)$  \\
\hline
$26$ & $(2,43)$  \\
\hline
\end{tabular}
\end{center}

\caption{Prime powers $(p_1,p_2)$ for which $f_n(x_1,x_2)=x_1^{1-\frac{1}{n}}x_2^{\frac{1}{4}-\frac{1}{2n}}e^{C_h(x_1,0.1)+C_h(x_2,0.1)}$ is minimized}
\label{table:minvaluecase2table}
\end{table}

\begin{table}[htbp]
\begin{center}
\begin{tabular}[b]{|c|c|}
\hline
$n$ & $(p_1,p_2)$ \\
\hline
$6$ & $(64,64)$ \\
\hline
$10$ &  $(53,53)$ \\
\hline
$14$ & $(47,47)$ \\
\hline
$18-26$ & $(43,43)$ \\
\hline
$30-114$ & $(41,41)$ \\
\hline
\end{tabular}
\end{center}

\caption{Prime powers $(p_1,p_2)$ for which $(x_1x_2)^{\frac{1}{4}-\frac{1}{2n}}e^{C_h(x_1,0.1)+C_h(x_2,0.1)}$ is minimized}
\label{table:minvaluecase3table}
\end{table}

\footnotetext{The `*' in Table \ref{table:minvaluecase2table} indicates that when $n=2$ the function $f_n(x_1,x_2)$ does not depend upon $x_2$ and will be minimized whenever $x_1=13$. }
\newpage

\subsection{Proof of Theorems \ref{theorem:discboundcase1}, \ref{theorem:discboundcase2} and \ref{theorem:discboundcase3}} We will now prove Theorem \ref{theorem:discboundcase1}. The proofs of Theorems \ref{theorem:discboundcase2} and \ref{theorem:discboundcase3} are similar and will be left to the reader.

Let $y_0\in\{0.1,2\}$ and $y\leq y_0$ be any positive real number. We have already seen, by combining Theorem \ref{theorem:minimaldiscriminanttheorem}, equation (\ref{equation:odlyzkomodified}) and Lemma \ref{lemma:basicproperties}, that \begin{equation}\label{equation:theorem1minimizer}
d(\Lambda/\mathbb Z)\geq N^K_{\mathbb Q}(P_1P_2)^{n(n-1)}\cdot\left[e^{C_h(N^K_{\mathbb Q}(P_1),y_0)}e^{C_h(N^K_{\mathbb Q}(P_2),y_0)}\right]^{n^2}\cdot \left[e^{r_1}e^{d(\gamma+\log 4\pi)}e^{-12\pi/5\sqrt{y}}e^{-I(y)}\right]^{n^2}.
\end{equation}
We begin by obtaining a lower bound for the related quantity
\begin{equation}\label{equation:theorem1minimizer2}
N^K_{\mathbb Q}(P_1P_2)^{1-\frac{1}{n}}\cdot e^{C_h(N^K_{\mathbb Q}(P_1),y_0)}e^{C_h(N^K_{\mathbb Q}(P_2),y_0)}\cdot e^{r_1}e^{d(\gamma+\log 4\pi)}e^{-12\pi/5\sqrt{y}}e^{-I(y)}
\end{equation} 
\noindent Because we are viewing the signature of $K$ as being fixed, it suffices to simply determine the prime powers $p_1, p_2$ for which $$(p_1p_2)^{1-\frac{1}{n}}e^{C_h(p_1,y_0)+C_h(p_2,y_0)}$$ is minimized. This was done in Proposition \ref{proposition:theorem1proposition}. The theorem follows by substituting these values into (\ref{equation:theorem1minimizer}) and performing simple algebraic manipulations.

\section{A user's guide to discriminant bounds}
\label{examples}
In this section we will discuss how to use the bounds of the previous section and will compare them to certain naive bounds defined below. We give the construction of the naive bound only for the case considered in Theorem \ref{theorem:discboundcase1}, although analogous bounds can be deduced for the other cases a virtually identical manner.

Let $K$ be a number field of degree $d$ and $P_1, P_2$ be the smallest prime ideals of $K$ (with respect to the order relation on the prime ideals of $K$ given in the first paragraph of Section \ref{sec:algprelim}). If we suppose that no infinite place of $K$ is ramified in the degree $n$ central division algebra $\A$ (this is the case when $K$ is totally complex for instance), then for any order $\Lambda\subset\A$ 
we have that by Theorem \ref{theorem:minimaldiscriminanttheorem}
\begin{equation}\label{discriminant2}
d(\Lambda/\Z)\geq(N_{K/\mathbb Q}(P_1) N_{K/\mathbb Q}(P_2))^{n(n-1)} d(\mathcal O_K/\mathbb Z)^{n^2}.
\end{equation}

This equation suggests a trivial bound that can be used to gauge the quality of the bounds proven in the previous section.
Denote by $C_{r_1,d}$ the best known Odlyzko discriminant bound for a degree $d$ number field $K$ containing precisely $r_1$ real primes.
\begin{proposition}\label{trivbound}
Suppose that $K$ is a totally complex number field of degree $d$ and $\A$ is a central division algebra defined over $K$ which has degree $n$. If $\Lambda$ is a $\Z$-order contained in $\A$ then
$$
 d(\Lambda/\mathbb Z)\geq 4^{n(n-1)}(C_{0,d})^{n^2}.
$$
\end{proposition}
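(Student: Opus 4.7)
The plan is to combine the three ingredients already assembled in the paper: the minimality of maximal order discriminants, the lower bound for the $\OO_K$-discriminant coming from Theorem \ref{theorem:minimaldiscriminanttheorem}, and the Odlyzko bound $C_{0,d}$ for the base field.

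First, I would reduce to the case of a maximal $\Z$-order. Standard order theory (as in \cite{Reiner}) gives that the discriminant of any $\Z$-order $\Lambda$ in $\A$ is a multiple of the discriminant of a maximal $\Z$-order containing it, so it suffices to prove the bound when $\Lambda$ is maximal. As observed right after Proposition 5.1 in the excerpt, a maximal $\Z$-order in $\A$ is automatically an $\OO_K$-order, so the transitivity formula
\[
d(\Lambda/\Z) = N_{K/\Q}\bigl(d(\Lambda/\OO_K)\bigr)\cdot d(\OO_K/\Z)^{n^2}
\]
applies.

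Next I would bound the two factors separately. Because $K$ is totally complex, no real place of $K$ can ramify in $\A$, so Theorem \ref{theorem:minimaldiscriminanttheorem}(1) gives
\[
d(\Lambda/\OO_K) \geq (P_1 P_2)^{n(n-1)},
\]
where $P_1, P_2$ are the two prime ideals of $\OO_K$ of smallest norm. Taking $N_{K/\Q}$ and using the trivial lower bound $N_{K/\Q}(P_i)\geq 2$ (every prime ideal lies over some rational prime, hence has norm at least $2$), we obtain
\[
N_{K/\Q}\bigl(d(\Lambda/\OO_K)\bigr) \geq \bigl(N_{K/\Q}(P_1)N_{K/\Q}(P_2)\bigr)^{n(n-1)} \geq 4^{n(n-1)}.
\]
For the second factor, the Odlyzko lower bound for totally complex number fields of degree $d$ gives directly $d(\OO_K/\Z)^{n^2} \geq (C_{0,d})^{n^2}$.

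Multiplying the two estimates yields the claimed bound
\[
d(\Lambda/\Z) \geq 4^{n(n-1)}(C_{0,d})^{n^2}.
\]
There is no real obstacle here; the proposition is essentially a bookkeeping corollary of Theorem \ref{theorem:minimaldiscriminanttheorem}(1) together with the classical Odlyzko bound, and is included in the paper as a benchmark against which to compare the sharper bounds of Section \ref{bounding_sec} (which do better by exploiting the interplay between small primes lowering $d(\Lambda/\OO_K)$ and the Poitou refinement forcing $d(\OO_K/\Z)$ to grow).
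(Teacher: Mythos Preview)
Your proof is correct and follows the same approach as the paper: reduce to a maximal order, invoke Theorem \ref{theorem:minimaldiscriminanttheorem}(1) (applicable since a totally complex center has no real places to ramify), bound each prime norm below by $2$, and apply the Odlyzko bound $C_{0,d}\leq d(\OO_K/\Z)$. The paper's proof is simply a terser version of yours, citing the already-recorded inequality \eqref{discriminant2} in place of spelling out the reduction to maximal orders and the transitivity formula.
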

\begin{proof}
It is clear that $C_{0,d }\leq d(\mathcal O_K/\mathbb Z)$. As the norm of any prime of $K$ must be at least $2$, the result follows from Equation \eqref{discriminant2}. 
\end{proof}

Let us now see how our bounds in Section \ref{bounding_sec} stack up against this naive bound. In order to compare them, we will  transform the main theorems in Section \ref{bounding_sec} to an easy-to-use form involving classical Odlyzko bounds $C_{r_1,d}$. This is done in Corollaries \ref{cor_6.1}-\ref{cor_6.3}. 


The function $I(y)$ that appeared in  \eqref{equation:odlyzko} depends on the degree $d$ of the field extension $K$ and the number of real embeddings from $K$ into $\R$. More precisely,
$$
I(y)=I_{r_1,d}(y)= \int_{x=0}^{\infty} d \frac{1-f(x\sqrt{y})}{\sinh(x)}+r_1\frac{1-f(x\sqrt{y})}{\cosh(x/2)}dx,
$$
where $d$ is the degree of $K$ and  $r_1$ is the number of real embeddings from $K$ to $\R$. Let $y_{r_1,d}$ the value of $y$ which maximizes 
 \begin{equation}\label{standardOdlyzko}
 e^{r_1}e^{d(\gamma+\log 4\pi)}e^{-12\pi/5\sqrt{y}}e^{-I_{r_1,d}(y)}
\end{equation}
over all real $y>0$. According to \cite{Diaz}, we have 
$$
C_{r_1,d}= e^{r_1}e^{d(\gamma+\log 4\pi)}e^{-12\pi/5\sqrt{y_{r_1,d}}}e^{-I_{r_1,d}(y_{r_1, d})}.
$$

\begin{proposition}\label{assump1}
There exist integers  $1\leq N_1 \leq N_2$ such that  when $d>N_1$ we have that $y_{r_1,d}<2$  and when $d>N_2$ we have $y_{r_1, d}<0.1$.
\end{proposition}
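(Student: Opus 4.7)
My plan is to regard $y_{r_1,d}$ as a maximizer (over $y>0$) of
\[
\phi_{r_1,d}(y):=r_1+d(\gamma+\log 4\pi)-\frac{12\pi}{5\sqrt{y}}-I_{r_1,d}(y),
\]
and to split $I_{r_1,d}(y)=d\,J(y)+r_1\,K(y)$, where
\[
J(y):=\int_0^\infty \frac{1-f(x\sqrt{y})}{\sinh x}\,dx,\qquad K(y):=\int_0^\infty \frac{1-f(x\sqrt{y})}{\cosh(x/2)}\,dx.
\]
The proposition reduces to showing that for any fixed $y_0>0$ there exists an integer $N(y_0)$ such that $y_{r_1,d}<y_0$ whenever $d>N(y_0)$; one may then set $N_1=\max(1,\lceil N(2)\rceil)$ and $N_2=\max(N_1,\lceil N(0.1)\rceil)$. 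Heuristically, as $d$ grows the $-d\,J(y)$ term penalizes large $y$ more and more heavily, while the only term pulling $y$ away from $0$ is the (bounded-below) term $-\tfrac{12\pi}{5\sqrt{y}}$, so the balance point tends to $0$.

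The first step is to establish three analytic properties of $J$ and $K$. (i) Both are continuous on $[0,\infty)$ with $J(0)=K(0)=0$; this follows from dominated convergence using the estimate $1-f(t)=\tfrac{1}{5}t^2+O(t^4)$ near $0$ (coming from the Taylor expansion of $f(t)=(3t^{-3}(\sin t-t\cos t))^2$). (ii) $K$ is uniformly bounded by $\int_0^\infty dx/\cosh(x/2)=\pi$, since $0\leq f\leq 1$; the inequality $f\leq 1$ amounts to $|3(\sin t-t\cos t)|\leq t^3$ for $t\geq 0$, which I would verify by monotonicity in $t$. (iii) $J(y)>0$ for $y>0$ (since $f(t)<1$ for $t\neq 0$, by the same monotonicity argument applied strictly), and $J(y)\to\infty$ as $y\to\infty$. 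For the divergence, I would split $J=\int_0^1+\int_1^\infty$; the tail is uniformly bounded by $\int_1^\infty dx/\sinh x$, while on $[0,1]$ I would use $\sinh x\leq x\sinh 1$ and substitute $u=x\sqrt{y}$ to obtain
\[
\int_0^1\frac{1-f(x\sqrt{y})}{\sinh x}\,dx\;\geq\;\frac{1}{\sinh 1}\int_0^{\sqrt{y}}\frac{1-f(u)}{u}\,du,
\]
and this diverges logarithmically because $1-f(u)\to 1$ as $u\to\infty$.

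Granted these properties, fix $y_0>0$. By (iii) and continuity of $J$, the quantity $c_0:=\inf_{y\geq y_0} J(y)$ is strictly positive: there exists $Y>y_0$ with $J(y)>J(y_0)$ on $[Y,\infty)$, so the infimum is attained on the compact set $[y_0,Y]$. By (i), choose $y_1\in(0,y_0)$ with $J(y_1)<c_0/2$. Then for every $y\geq y_0$, using (ii) to bound $|K(y)-K(y_1)|\leq\pi$,
\[
\phi_{r_1,d}(y_1)-\phi_{r_1,d}(y)=\Bigl(\frac{12\pi}{5\sqrt{y}}-\frac{12\pi}{5\sqrt{y_1}}\Bigr)+d\bigl(J(y)-J(y_1)\bigr)+r_1\bigl(K(y)-K(y_1)\bigr)\;\geq\;\frac{dc_0}{2}-\frac{12\pi}{5\sqrt{y_1}}-r_1\pi,
\]
which becomes strictly positive once $d>N(y_0):=(24\pi/(5\sqrt{y_1})+2r_1\pi)/c_0$. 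For such $d$, no maximizer of $\phi_{r_1,d}$ can lie in $[y_0,\infty)$, so $y_{r_1,d}<y_0$, completing the proof.

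The main obstacle is property (iii). Because $f$ oscillates and has infinitely many zeros, $J$ need not be monotone in $y$, so one cannot hope to prove $y_{r_1,d}<y_0$ via a direct monotonicity argument. The $u$-substitution nonetheless reduces the key divergence to the elementary logarithmic divergence of $\int du/u$, and the lack of monotonicity is harmless in the comparison above since all that is needed is a single $y_1$ with $J(y_1)$ strictly below $\inf_{y\geq y_0}J(y)$, which is supplied by $J(0^+)=0$.
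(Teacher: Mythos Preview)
Your argument is correct and follows the same strategy as the paper's: both establish a uniform lower bound $I_{r_1,d}(y)\ge dC$ on $[y_0,\infty)$ and then exhibit a single small $y_1$ at which the objective beats every value on $[y_0,\infty)$ once $d$ is large. Where the paper quotes Poitou's estimate $I_{r_1,d}(y)\le d\,\ell\, y$ to handle the small-$y$ regime, you obtain the needed $J(0^+)=K(0^+)=0$ directly from the Taylor expansion of $f$, making your version self-contained; the only cosmetic point is that your $N(y_0)$ carries an $r_1$, which is easily absorbed via $r_1\le d$ and $K(y_1)\to 0$ if a bound uniform in $r_1$ is wanted.
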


\begin{proof}
Let $y_c$ be a positive real number. We will now prove that when $d$ is large enough the optimal $y$ will be smaller than $y_c$.  Through elementary analysis we   can  see that there exists a positive constant $C$ such that $I_{r_1,d}(y)\geq d C$, for all $y\geq y_c$. Therefore $\frac{12\pi}{5\sqrt{y}} + I_{r_1,d}(y) \geq d C$, for  all $y\geq y_c$.  

It is now enough to prove that there exists such $y$ that
\begin{equation}\label{poit1}
\frac{12\pi}{5\sqrt{y}}+ I_{r_1,d}(y)< d C,
\end{equation}
as in this case  the $y$ must be smaller than $y_c$.

Poitou \cite[p. 6]{Poitou} proves that  for a certain constant $l$ (which is independent of $r_1$ and $d$) we have that 
\begin{equation}\label{poit2}
I_{r_1,d}(y)\leq l y.
\end{equation}

Combining \eqref{poit1}  and  \eqref{poit2} we can see that it  is now enough to prove that when $d$ is large enough we have such  $y$   that
$$
\frac{12\pi}{5}+ d\sqrt{y}(yl-C) < 0,
$$
which, for large enough $d$, is obviously true when $y=C/(l+1)$.\end{proof}

\begin{remark}
This proposition proves that for sufficiently large $d$ our discriminant bounds are effective. Explicitly, calculations in \cite{Diaz} show already that when $d>7$, we have $y_{r_1,d}<2$.
\end{remark}

\begin{remark}
The bounds in \cite{Diaz} are actually calculated by using a  simple approximation of the function $I_{r_1,d}(y)$ (see \cite[p. 16]{Poitou}), which gives slightly weaker bounds. The differences between these weaker bounds and those obtained by optimizing \eqref{standardOdlyzko} are very small and the loss arising from using the tables in  \cite{Diaz} is irrelevant for practical purposes.
\end{remark}

We next state the easy-to-use versions of our bounds in Section \ref{bounding_sec}. Corollaries \ref{cor_6.1}, \ref{cor_6.2}, and \ref{cor_6.3} follow immediately from Proposition \ref{assump1} and Theorems \ref{theorem:discboundcase1}, \ref{theorem:discboundcase2}, and \ref{theorem:discboundcase3}, respectively.

\begin{corollary}
\label{cor_6.1}
Let $K$ be a number field of degree $d$ and signature $(r_1,r_2)$, $\A$ be a central division algebra over $K$ of degree $n\geq 2$ and signature $(0,r_1,r_2)$, and $\Lambda$ be a maximal order of $\A$. Lastly, let $(p_1,p_2)$ be the relevant pair of prime powers from Table \ref{table:minvaluecase1table}.
\begin{enumerate}
\item If $d>N_2$, then
\[
d(\Lambda/\mathbb Z)\geq\left\{
\begin{array}{lr}
4^{n(n-1)}(53.450)^{n^2}(C_{r_1, d})^{n^2}, & n\geq 7\\
(p_1p_2)^{n(n-1)}(e^{C_h(p_1,0.1)+C_h(p_2,0.1)})^{n^2}(C_{r_1, d})^{n^2}, & 2\leq n\leq 6
\end{array}
\right.
\]
\item If $d>N_1$, then
\[
d(\Lambda/\mathbb Z)\geq\left\{
\begin{array}{lr}
4^{n(n-1)}(8.134)^{n^2}(C_{r_1, d})^{n^2}, & n\geq 7\\
(p_1p_2)^{n(n-1)}(e^{C_h(p_1,2)+C_h(p_2,2)})^{n^2}(C_{r_1, d})^{n^2}, & 2\leq n\leq 6
\end{array}
\right.
\]
\end{enumerate}
\end{corollary}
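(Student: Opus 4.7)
The plan is to deduce the corollary by specializing Theorem \ref{theorem:discboundcase1} to the particular choice $y = y_{r_1,d}$, the value that maximizes the Odlyzko quantity \eqref{standardOdlyzko}. Theorem \ref{theorem:discboundcase1} produces a lower bound on $d(\Lambda/\mathbb{Z})$ that holds for every positive real $y \leq y_0$, with the factor $z(y)$ depending on $y$ through precisely the combination $e^{-12\pi/5\sqrt{y}}e^{-I_{r_1,d}(y)}$. Everything else in the stated bound (the constants $4^{n(n-1)}$, $(53.450)^{n^2}$, etc., or the analogous prime-power terms drawn from Table \ref{table:minvaluecase1table}) is independent of $y$, so it carries through unchanged; only the $z(y)$ factor must be recognized as an Odlyzko bound.

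First I would verify that the hypothesis $y \leq y_0$ of Theorem \ref{theorem:discboundcase1} is satisfied by the choice $y = y_{r_1,d}$ under the degree assumption of the corollary. This is exactly what Proposition \ref{assump1} delivers: by definition of $N_1$ and $N_2$, whenever $d > N_1$ we have $y_{r_1,d} < 2$, and whenever $d > N_2$ we have $y_{r_1,d} < 0.1$. Hence in part (1) of the corollary, where $d > N_2$, we may apply Theorem \ref{theorem:discboundcase1} with $y_0 = 0.1$ and $y = y_{r_1,d}$; similarly in part (2), where $d > N_1$, we may apply Theorem \ref{theorem:discboundcase1} with $y_0 = 2$ and $y = y_{r_1,d}$.

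Next, I would invoke the definition of the Odlyzko bound,
\[
C_{r_1,d} \;=\; e^{r_1}\,e^{d(\gamma+\log 4\pi)}\,e^{-12\pi/5\sqrt{y_{r_1,d}}}\,e^{-I_{r_1,d}(y_{r_1,d})},
\]
recalled from \cite{Diaz}. Plugging $y = y_{r_1,d}$ into the expression $z(y) = [e^{r_1}e^{d(\gamma+\log 4\pi)}e^{-12\pi/5\sqrt{y}}e^{-I(y)}]^{n^2}$ of Theorem \ref{theorem:discboundcase1} yields exactly $z(y_{r_1,d}) = (C_{r_1,d})^{n^2}$. Substituting this identity into each of the two cases of Theorem \ref{theorem:discboundcase1} gives, term for term, the two displayed inequalities in the statement of Corollary \ref{cor_6.1}.

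There is no real obstacle here — the corollary is a direct repackaging of Theorem \ref{theorem:discboundcase1} once we know that the maximizing $y$ lies inside the allowed range $(0,y_0]$. The only subtlety to flag is that Theorem \ref{theorem:discboundcase1} is stated with a universally quantified $y \leq y_0$, so plugging in the specific value $y_{r_1,d}$ is legitimate precisely because Proposition \ref{assump1} guarantees the bound on $y_{r_1,d}$; without that input one would be forced to use a suboptimal $y$ and would not recover the clean Odlyzko factor $C_{r_1,d}$.
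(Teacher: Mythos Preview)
Your proposal is correct and matches the paper's approach exactly: the paper simply states that Corollary~\ref{cor_6.1} follows immediately from Proposition~\ref{assump1} and Theorem~\ref{theorem:discboundcase1}, and your write-up spells out precisely this derivation---use Proposition~\ref{assump1} to ensure $y_{r_1,d}\leq y_0$, specialize Theorem~\ref{theorem:discboundcase1} to $y=y_{r_1,d}$, and identify $z(y_{r_1,d})=(C_{r_1,d})^{n^2}$.
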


\begin{corollary}
\label{cor_6.2}
Let $K$ be a number field of signature $(1,r_2)$, $\A$ be a central division algebra over $k$ of degree $n=2m$ (with $m$ odd) and $\Lambda$ be a maximal order of $\A$. Lastly, let $(p_1,p_2)$ be the relevant pair of prime powers from Table \ref{table:minvaluecase2table}.

\begin{enumerate}
\item If $d>N_2$, then
\[
d(\Lambda/\mathbb Z)\geq\left\{
\begin{array}{lr}
2^{n(n-1)}41^{m(m-1)}(9.572)^{n^2}(C_{r_1,d})^{n^2}, & n\geq 30\\
p_1^{n(n-1)}p_2^{m(m-1)}(e^{C_h(p_1,0.1)+C_h(p_2,0.1)})^{n^2}(C_{r_1,d})^{n^2}, & 2\leq n\leq 26
\end{array}
\right.
\]
\item If $d>N_1$, then
\[
d(\Lambda/\mathbb Z)\geq\left\{
\begin{array}{lr}
2^{n(n-1)}41^{m(m-1)}(2.852)^{n^2}(C_{r_1,d})^{n^2}, & n\geq 30\\
p_1^{n(n-1)}p_2^{m(m-1)}(e^{C_h(p_1,2)+C_h(p_2,2)})^{n^2}(C_{r_1,d})^{n^2}, & 2\leq n\leq 26
\end{array}
\right.
\]
\end{enumerate}
\end{corollary}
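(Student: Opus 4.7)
The plan is to obtain Corollary \ref{cor_6.2} as a direct specialization of Theorem \ref{theorem:discboundcase2}, using Proposition \ref{assump1} to justify the substitution $y = y_{r_1,d}$ in the lower bound. The corollary has exactly the same shape as Theorem \ref{theorem:discboundcase2} with the factor $z(y)$ replaced by $(C_{r_1,d})^{n^2}$, so the entire task reduces to showing that for large enough $d$, the value $y=y_{r_1,d}$ lies in the admissible range $y \leq y_0$.

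Concretely, I would proceed as follows. First, recall from the discussion after Proposition \ref{assump1} that
\[
C_{r_1,d} = e^{r_1} e^{d(\gamma+\log 4\pi)} e^{-12\pi/5\sqrt{y_{r_1,d}}} e^{-I_{r_1,d}(y_{r_1,d})},
\]
so that $z(y_{r_1,d}) = (C_{r_1,d})^{n^2}$. For part (1), set $y_0 = 0.1$ and invoke Proposition \ref{assump1} to find $N_2$ with the property that $d > N_2$ implies $y_{r_1,d} < 0.1 = y_0$; hence $y := y_{r_1,d}$ is an admissible choice in Theorem \ref{theorem:discboundcase2}(1). Plugging this $y$ into the two cases $n \geq 30$ and $2 \leq n \leq 26$ of Theorem \ref{theorem:discboundcase2}(1) replaces the factor $z(y)$ by $(C_{r_1,d})^{n^2}$, which gives exactly the stated bounds. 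For part (2), do the same with $y_0 = 2$ and the threshold $N_1$ from Proposition \ref{assump1}, using Theorem \ref{theorem:discboundcase2}(2).

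Since everything is already packaged in the earlier results, there is no real obstacle here; the only thing worth noting explicitly is that the allowance $y \leq y_0$ in Theorem \ref{theorem:discboundcase2} is precisely what lets us substitute the Odlyzko-optimal value $y_{r_1,d}$, and that the corresponding degree cutoffs $N_1 \leq N_2$ come for free from Proposition \ref{assump1}. The analogous proofs of Corollaries \ref{cor_6.1} and \ref{cor_6.3} use the same two-step recipe applied to Theorems \ref{theorem:discboundcase1} and \ref{theorem:discboundcase3}, respectively.
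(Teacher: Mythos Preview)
Your proposal is correct and matches the paper's approach exactly: the paper states that Corollary~\ref{cor_6.2} follows immediately from Proposition~\ref{assump1} and Theorem~\ref{theorem:discboundcase2}, which is precisely the two-step substitution you describe. There is nothing to add.
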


\begin{corollary}
\label{cor_6.3}
Let $K$ be a number field of signature $(r_1,r_2)$ with $r_1\geq 2$, $\A$ be a central division algebra over $k$ of degree $n=2m$ (with $m$ odd) and $\Lambda$ be a maximal order of $\A$. Lastly, let $(p_1,p_2)$ be the relevant pair of prime powers from Table \ref{table:minvaluecase3table}.
\begin{enumerate}
\item If $d>N_2$, then
\[
d(\Lambda/\mathbb Z)\geq\left\{
\begin{array}{lr}
37^{2m(m-1)}(1.803)^{n^2}(C_{r_1,d})^{n^2}, & n\geq 118\\
(p_1p_2)^{m(m-1)}(e^{C_h(p_1,0.1)+C_h(p_2,0.1)})^{n^2}(C_{r_1,d})^{n^2}, & 6\leq n\leq 114
\end{array}
\right.
\]
\item If $d>N_1$, then
\[
d(\Lambda/\mathbb Z)\geq\left\{
\begin{array}{lr}
9^{2m(m-1)}(1.189)^{n^2}(C_{r_1,d})^{n^2}, & n\geq 14\\
(11)^{2m(m-1)}(1.091)^{n^2}(C_{r_1,d})^{n^2}, & n=6,10
\end{array}
\right.
\]
\end{enumerate}
\end{corollary}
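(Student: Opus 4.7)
The plan is to derive Corollary \ref{cor_6.3} as a direct substitution into Theorem \ref{theorem:discboundcase3}, using Proposition \ref{assump1} to justify the choice of parameter. Theorem \ref{theorem:discboundcase3} provides, for any positive real $y \le y_0$ with $y_0 \in \{0.1,2\}$, a lower bound on $d(\Lambda/\mathbb Z)$ of the form $A(n)\cdot z(y)$, where $A(n)$ is the product of the powers-of-primes factor and the explicit numerical factor (e.g.\ $37^{2m(m-1)}(1.803)^{n^2}$ or $(p_1p_2)^{m(m-1)}(e^{C_h(p_1,0.1)+C_h(p_2,0.1)})^{n^2}$), and
\[
z(y)=\left[e^{r_1}e^{d(\gamma+\log 4\pi)}e^{-12\pi/5\sqrt{y}}e^{-I_{r_1,d}(y)}\right]^{n^2}.
\]
Crucially, $z(y)^{1/n^2}$ is precisely the Odlyzko functional whose maximum over $y>0$ equals $C_{r_1,d}$, attained at $y = y_{r_1,d}$.

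The first step is to recall from the discussion just before Proposition \ref{assump1} that, by the definition of $y_{r_1,d}$ and the cited result of Diaz y Diaz,
\[
z(y_{r_1,d}) \;=\; (C_{r_1,d})^{n^2}.
\]
So if one is permitted to take $y = y_{r_1,d}$ in the statement of Theorem \ref{theorem:discboundcase3}, the $z(y)$ factor in the bound becomes exactly $(C_{r_1,d})^{n^2}$, yielding the claimed inequality. Thus the only thing to verify is that $y_{r_1,d}$ lies in the admissible range dictated by the relevant case of Theorem \ref{theorem:discboundcase3}, namely $y_{r_1,d}\le 0.1$ in case (1) and $y_{r_1,d}\le 2$ in case (2).

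The second step invokes Proposition \ref{assump1}: there exist integers $1\le N_1\le N_2$ such that $d>N_1$ forces $y_{r_1,d}<2$ and $d>N_2$ forces $y_{r_1,d}<0.1$. Under the hypothesis $d>N_2$ of case (1), we may therefore set $y := y_{r_1,d}$, which satisfies $y<0.1$, and apply part (1) of Theorem \ref{theorem:discboundcase3} with $y_0=0.1$; the $z(y)$ factor collapses to $(C_{r_1,d})^{n^2}$, and the resulting inequality is the displayed bound of Corollary \ref{cor_6.3}(1). Under the hypothesis $d>N_1$ of case (2), we analogously set $y:=y_{r_1,d}<2$ and apply part (2) of Theorem \ref{theorem:discboundcase3} with $y_0=2$.

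There is no real obstacle here: the corollary is purely a repackaging of Theorem \ref{theorem:discboundcase3} in which the implicit Odlyzko factor $z(y)$ is replaced by the best Odlyzko bound $(C_{r_1,d})^{n^2}$. The only content is the observation that the choice $y=y_{r_1,d}$ is legitimate, which is exactly what Proposition \ref{assump1} supplies. The same recipe proves Corollaries \ref{cor_6.1} and \ref{cor_6.2}, using Theorems \ref{theorem:discboundcase1} and \ref{theorem:discboundcase2} respectively, with no further modification.
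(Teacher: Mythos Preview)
Your proposal is correct and matches the paper's approach exactly: the paper states that Corollary~\ref{cor_6.3} follows immediately from Proposition~\ref{assump1} and Theorem~\ref{theorem:discboundcase3}, and your argument spells out precisely this substitution, using Proposition~\ref{assump1} to ensure $y_{r_1,d}$ lies in the admissible range so that $z(y_{r_1,d})=(C_{r_1,d})^{n^2}$.
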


We can now immediately see the difference between our bounds and the trivial ones. Both involve $(C_{r_1,d})^{n^2}$, but while the naive bound uses the multiplicative term $4^{n(n-1)}$, we have a considerably larger term. Our bounds are therefore much stronger when the degree $n$  of the algebra is large.

\subsection{Finding optimal algebras}
Through computer simulations we see that when the degree of the center is less than $7$ the value of $y_{r_1, d}$ that maximizes  \eqref{standardOdlyzko} is larger than $2$ and therefore our bounds do not apply. However, for these cases we do not need discriminant bounds as we can simply perform brute force searches for optimal algebras. Let us now explain how these searches can be carried out.

Suppose that $K$ is a totally complex field of degree $d$, and that  $P_1$ and $P_2$ are a pair of smallest prime ideals in $K$. Then there exists a degree $n$ division algebra $\A$ with maximal $\Z$-order $\Lambda$ having discriminant
\begin{equation}\label{discriminant}
d(\Lambda/\Z)=(N_{k/\mathbb Q}(P_1) N_{K/\mathbb Q}(P_2))^{n(n-1)} d(\mathcal O_K/\mathbb Z)^{n^2}.
\end{equation}
Moreover, this is the smallest possible discriminant of a maximal $\Z$-order given that the center of $\A$ is $K$ \cite[Theorem 2.4.26]{RoopeThesis}.  

This formula allows us to perform a brute force search for optimal algebras. The key point is that given a degree $d$ and a real number $M$, there are only finitely many number fields of degree $d$ with discriminant smaller than $M$. We may therefore limit ourselves to the finite search space of degree $d$ fields $K$ with discriminant smaller than 
\begin{equation}\label{bound}
(N_{K/\mathbb Q}(P_1) N_{K/\mathbb Q}(P_2))^{(1-1/n)} d(\mathcal O_K/\mathbb Z),
\end{equation}
and make use of existing tables which contain all number fields of sufficiently small degree and discriminant \cite{LMFD}. For each such field, we find the smallest primes and compute the value of the $\Z$-discriminant given in equation \eqref{discriminant}. We then simply choose the center which minimizes this value.


\begin{example}
Let us demonstrate how this search can be performed in the case of degree $n$ central division algebras defined over a totally complex number field of degree $4$.

When $n=2$ a search through the tables of number fields with signature $(0,2)$ yields a field $K$ of discriminant $d(K)=3^2\cdot 13$ with primitive element having minimal polynomial $x^4-x^3-x^2+x+1$. The field $K$ has primes $P_1$ and $P_2$ both of norm $7$. Hence, there is a degree $2$ division algebra $\A$ containing an order $\Lambda$ such that
$$
d(\Lambda/\Z)=7^4 (3^2\cdot 13)^{4}=449920319121.
$$

We can similarly find the optimal centers for every $n$. The results appear in the following table.

\begin{table}[!h]
\begin{center}
\begin{tabular}[b]{|c|c|c|c|}
\hline
degree of the algebra & $(N_{K/\mathbb Q}(P_1), N_{K/\mathbb Q}(P_2)))$&  $d(\mathcal O_K/\mathbb Z)$& $(d(\Lambda/ \mathbb Z))^{1/n}$\\
\hline
$n=2$ & $(7,7)$ &$3^2\cdot 13$& $49\cdot  3^2\cdot 13^2$  \\
\hline
$n=3$ &  $(4,4)$ &$3^2\cdot 5^2$&$16^2 (3^2\cdot 5^2)^3$   \\
\hline
$n=4$ & $(3,3)$ & $3^2\cdot 37$& $9^3 (3^2\cdot 37)^4  $\\
\hline
$n=5$ & $(3,3)$& $3^2\cdot 37$&  $9^4(3^2\cdot 37)^5$     \\
\hline
$n=6$ & $(3,3)$& $3^2\cdot 37$&  $9^5 \cdot (3^2\cdot 37)^6$  \\
\hline
$n=7$ & $(3,3)$& $3^2\cdot 37$&  $9^6 \cdot (3^2\cdot 37)^7$ \\
\hline
$n>7$ & $(2,2)$& $2^4\cdot 41$&  $4^{n-1}(2^4\cdot 41)^n$ \\
\hline
\end{tabular}
\caption{The optimal algebras with degree  $4$ totally complex centers}
\end{center}
\label{table:optimal}
\end{table}

Here we can see that the optimal center varies as a function of $n$ and the degree of the algebra,  but stabilizes to the field $K$ of discriminant $2^4\cdot 41$ which has two prime ideals with norm 2.  

\end{example}

\begin{remark}
If we use the algebra described in the first line of Table \ref{table:optimal} together with the construction of Proposition \ref{reg2}, we obtain a $16$-dimensional lattice code for the $(2,2,2)$-multiblock channel. 

Previously the best discriminant achieved \cite{HL} corresponded to the center $K$ of discriminant $d(K)=2^4\cdot 3^2$ and primitive element of minimal polynomial $x^4-x^2+1$. The minimal primes in this field have norms $4$ and $9$. The corresponding discriminant therefore is of the form
$$
(4\cdot 9)^2\cdot (2^4\cdot 3^2)^4=557256278016,
$$
revealing that we managed to find an algebra with the smallest known discriminant and therefore also the multiblock code with the largest possible minimum determinant. However, we point out that  in \cite{HL} the authors were concentrating only on fields $K$ that have $\Q(i)$ as a subfield, while we optimized over all totally complex fields.  
\end{remark}

\subsection{Minimum determinant bounds from discriminant bounds}
  
 We conclude the paper by showing how discriminant bounds can be transformed into minimum determinant bounds. As a concrete example, we concentrate on the $(2,2, k)$-multiblock channel.  To apply the construction given in Proposition \ref{reg2}, we need a $d = 2k$-dimensional totally complex field and a degree $2$ division algebra $\A$.  The minimum determinant of any $\Z$-order $\Lambda$ in $\A$ is then given by 
 $$
\delta(\psi_{reg2}(\Lambda))=\left(\frac{2^{4d}}{|d(\Lambda/\Z)|}\right)^{1/8}.
$$ 
By Corollary \ref{cor_6.1}, we have
 $$
d(\Lambda/\mathbb Z)\geq (p_1p_2)^{n(n-1)}(e^{C_h(7,2)+C_h(7,2)})^{n^2} (C'_{r_1, d})^{n^2}\geq (7)^{4}(1.4121)^4 (C'_{0, d})^{4}.
$$

Combining the two previous formulas we see
$$
\delta(\psi_{reg2}(\Lambda))\leq \frac{2^{d/2}}{\sqrt{(9.8847)(C'_{0, d})}}.
$$

According to tables in \cite{Diaz} we find that $ (C'_{0, 8})\geq 5.6^8$ and $(C'_{0, 10})\geq 6.6^{10}$.


In the following table we consider  example algebras and compare these to our bounds. As stated earlier, our bounds are only relevant for degrees $d>7$ and therefore only given in the table below when $d=8$ and $d=10$.
When $d\leq 6$ our example algebras are already optimal. When $d=8$ or $d=10$ the algebras were found through experimentation.

\begin{table}[htbp]
\begin{center}
\begin{tabular}[b]{|c|c|c|c|c|c|c|}
\hline
k&$d$ & $N(p_1) N(p_2)$ & $|d(\mathcal O_k/\mathbb Z)|$ & $\delta(\psi_{reg2}(\Lambda))^{1/d}$ &$(\mathrm{bound})^{(1/d)}$ &Optimality\\
\hline
1&$2$ & $(3,4)$ & $3$ & $ 0.78.. $ & -- &yes\\
\hline
2&$4$ &  $(7,7)$ & $3^2\cdot 13$ & $0.61..$ & -- &yes\\
\hline
3&$6$ & $(13,13)$ & $3^2\cdot 19^2$ & $0.63..$ & -- &yes  \\
\hline
4&$8$ & $(5,9)$ & $5 \cdot 17^{2} \cdot 43^{2} $ & $  0.49.. $ & $\leq 0.52$ &? \\
\hline
5&$10$ & $(11,23)$ & $11^9$ & $0.42..  $ & $\leq 0.50 $&? \\

\hline
\end{tabular}
\end{center}
\end{table}

\section{Acknowledgement}

The third author would like to thank Jyrki Lahtonen for pointing out the short proof of Lemma \ref{spherical}.



\end{document}